\documentclass[american,aps,pra,twocolumn,superscriptaddress]{revtex4-2}
\usepackage[colorlinks=true, allcolors=myurlcolor]{hyperref}
\usepackage{graphics,graphicx,amsthm,amsmath,amssymb,braket,color,framed}
\usepackage[T1]{fontenc}

\usepackage{tikz}
\usepackage{algorithm}
\usepackage{algorithmic}
\usepackage{enumerate}
\usepackage{multirow}
\usepackage{ulem}
\usepackage{tcolorbox}

\definecolor{myurlcolor}{rgb}{0,0,0.9}

\newcommand{\proj}[1]{| #1\rangle\!\langle #1 |}
\newcommand{\iinner}[2]{\langle #1 | #2\rangle}

\newcommand{\trace}{\operatorname{Tr}}
\newcommand{\Ptr}[2]{\trace_{#1}\Pa{#2}}
\newcommand{\Tr}[1]{\Ptr{}{#1}}
\newcommand{\Pa}[1]{\left[#1\right]}

\newcommand{\norm}[1]{\left\lVert #1 \right\rVert}

\newcommand{\ot}{\otimes}
\newcommand{\complex}{\mathbb{C}}
\newcommand{\real}{\mathbb{R}}
\newcommand{\NGE}{\operatorname{NGE}}
\newcommand{\MING}{\operatorname{MING}}
\newcommand{\CBB}{\mathcal{B}}
\newcommand{\CNN}{\mathcal{N}}
\newcommand{\bigO}{\mathcal{O}}
\providecommand{\T}{\mathsf{T}}

\theoremstyle{plain}
\newtheorem{thm}{Theorem}
\newtheorem{lem}[thm]{Lemma}
\newtheorem{prop}[thm]{Proposition}
\newtheorem{Def}[thm]{Definition}
\newtheorem{Rem}[thm]{Remark}

\newcommand*{\myproofname}{Proof}

\makeatother

\begin{document}

\title{Efficient Measurement of Bosonic Non-Gaussianity}

\author{Kaifeng Bu}
\email{bu.115@osu.edu}
\affiliation{Department of Mathematics, The Ohio State University, Columbus, Ohio 43210, USA}
\affiliation{Department of Physics, Harvard University, Cambridge, Massachusetts 02138, USA}

\author{Bikun Li}
\email{bikunli@uchicago.edu}
\affiliation{Pritzker School of Molecular Engineering, University of Chicago, Chicago, Illinois 60637, USA}

\begin{abstract}
    Non-Gaussian states are essential resources in quantum information processing. In this work, we investigate methods for quantifying bosonic non-Gaussianity in many-body systems. Building on recent theoretical insights into the self-convolution properties of bosonic pure states, we introduce non-Gaussian entropy as a new measure to characterize non-Gaussianity in bosonic pure states. We further propose a practical protocol for measuring non-Gaussian entropy using three beam splitters and four copies of the input state. In addition, we extend this framework to mixed states, providing a general approach to quantifying non-Gaussianity. Our results offer a convenient and efficient method for characterizing bosonic non-Gaussianity, paving the way for its implementation on near-term experimental platforms.
\end{abstract}

\maketitle

\section{Introduction}
Gaussian states, fundamental to continuous-variable (CV) quantum information, form a class of quantum states characterized by their elegant mathematical description via Gaussian Wigner distributions. This class—including coherent, squeezed, and thermal states—is particularly important due to the relative ease with which such states can be generated and manipulated using linear optical elements and Gaussian measurements~\cite{SethRMP12}. Their experimental accessibility makes them highly relevant for a wide range of practical applications, including quantum sensing and metrology~\cite{Caves81,Bondurant84,giovannetti2011advances,CorreaPRL15,Zhuang17,oh2019optimal}.

A crucial aspect of Gaussian states is that systems involving only Gaussian states, Gaussian unitary transformations, and Gaussian measurements can be efficiently simulated on a classical computer. This efficiency arises from the fact that such operations can be fully described using covariance matrices and mean vectors, which can be manipulated using polynomial-time classical algorithms~\cite{Bartlett02,Mari12,Veitch_2013}. Consequently, the inclusion of non-Gaussian elements, such as cubic phase gates or photon number resolving detectors, is essential for achieving universal quantum computation~\cite{Lloyd99,BartlettPRL02}. 

Besides, CV quantum systems have emerged as a promising platform for implementing quantum computation and demonstrating quantum advantage. Several sampling tasks have been proposed~\cite{Lund14,Douce17,Hamilton17,Cerf17}, including Gaussian boson sampling, a modification of the original boson sampling proposed by Aaronson and Arkhipov~\cite{aaronson2011computational}. This approach has attracted significant attention and has been realized experimentally~\cite{Pan20,Pan21,Jonathan22}. Furthermore, the ability to encode information in high-dimensional Hilbert spaces associated with continuous variables~\cite{Sivak2023,Ni2023,Rojkov2024,Brock2025}, offers opportunities for increased information capacity and improved robustness against noise. Therefore, the characterization and quantification of non-Gaussian states and channels are attracting significant attention.

To quantify the non-Gaussian nature of a quantum state or process,  the framework of resource theory has been developed~\cite{Genoni_PRA07,GenoniPRA08,Genoni10,Albarelli18,Takagi18,Albarelli18,ZhuangPRA18,SuPRA19,Chabaud20,Dias24,Calcluth24,hahn2024classical,WalschaersPRXQ24}. Two widely used approaches exist for the characterization and quantification of non-Gaussianity. The first approach is based on the non-negativity of the Wigner function, or equivalently, the positivity of the Husimi Q-function. Examples include the stellar rank~\cite{Chabaud20} and the logarithmic negativity of the Wigner function~\cite{Takagi18}. The theoretical basis for this characterization lies in Hudson's theorem~\cite{Hudson74,Soto83}, which states that a quantum state is Gaussian if and only if its Wigner function is non-negative. The second approach utilizes distance measures derived from resource theory, such as relative entropy~\cite{Genoni_PRA07,GenoniPRA08,Genoni10}. However, in the experiment, evaluating these measures against an unknown state requires the reconstruction of the quantum state, which leads to the exponential overhead with respect to the bosonic mode number.

In this work, we explore a new method for characterizing and quantifying non-Gaussianity based on quantum convolution associated with beam splitters, where state tomography is not needed. Our approach hinges on the violation of the classical entropic inequality $H(X+Y)\leq H(X)+H(Y)$ in the quantum setting, where $X+Y$ is the output signal of the additive noise channel with the input signal $X$ and noise $Y$. Leveraging this key insight, we introduce a new measure termed the ``non-Gaussian entropy,'' which we demonstrate to be an efficient measure for non-Gaussian bosonic pure states. 

Based on this concept, we also propose a practical experimental protocol for measuring this non-Gaussian entropy using three beam splitters
and four copies of the input state. Furthermore, for input states with zero mean, we demonstrate that the protocol can be simplified, requiring only two beam splitters and three copies of the input state.
We validate the performance of our proposed protocol through numerical simulations, even in the presence of noise. 
Additionally, we generalize this idea to quantify and measure non-Gaussianity in mixed states. Our results offer an efficient approach for characterizing and quantifying bosonic non-Gaussianity, with potential implications for quantum information processing.

\section{Basic Framework}
Consider a CV quantum system consisting of $N$ bosonic modes. The associated Hilbert space is $\mathcal{H} = \bigotimes_{i=1}^N \mathcal{H}_i$, with annihilation and creation operators $\{ \hat{a}_k, \hat{a}_k^\dagger \}_{k=1}^N$. These operators can be arranged into the vector operator $\hat{R} = (\hat{a}_1, \hat{a}_1^\dagger, \dots, \hat{a}_N, \hat{a}_N^\dagger)$,
which satisfies the canonical commutation relation $[\hat{R}_i, \hat{R}_j] = \Omega_{ij}, \quad \forall i,j \in \{1,2,\dots,2N\}$,
where $\Omega = \bigoplus_{i=1}^N \omega$, and $\omega = \begin{bmatrix} 0 & 1 \\ -1 & 0 \end{bmatrix}$.
The quadrature (position and momentum) operators are defined as $\hat{q}_k = (\hat{a}_k + \hat{a}_k^\dagger)/\sqrt{2}$, $\hat{p}_k = i(\hat{a}_k^\dagger - \hat{a}_k)/\sqrt{2}$, which can be grouped into the vector operator $\hat{x} = (\hat{q}_1, \hat{p}_1, \dots, \hat{q}_N, \hat{p}_N)$, satisfying $[\hat{x}_i, \hat{x}_j] = i \Omega_{ij}$. For a state $\rho$, the expectation value of an operator $\hat{O}$ is given by $\langle \hat{O} \rangle = \Tr{\rho \hat{O}}$.

Gaussian states and Gaussian unitaries are central concepts in CV quantum systems. A state is Gaussian if its Wigner function is Gaussian. A unitary is Gaussian if it is generated by a Hamiltonian that is quadratic in $\{ \hat{a}_k, \hat{a}_k^\dagger \}$. For example, the beam splitter acting on subsystems $A$ and $B$ is defined by
\begin{equation}\label{eq:U_BS}
U_{\theta} := \exp\left[\theta \sum_{i=1}^N \left( \hat{a}_{i,A}^\dagger \hat{a}_{i,B} - \hat{a}_{i,A} \hat{a}_{i,B}^\dagger \right)\right],
\end{equation}
where $\hat{a}_{i,A}$ and $\hat{a}_{i,B}$ are the annihilation operators for the $i$th mode in subsystems $A$ and $B$, respectively. The beam splitter is called 50:50 when $\theta = \pi/4$, performing addition and subtraction on the input quadratures (up to a scaling factor of $1/\sqrt{2}$).
This motivates the definition of quantum convolution~\cite{Cushen71,Datta21,beigi2023towards,BGJ23a}:
\begin{equation}\label{eq:QConv}
    \begin{aligned}
        \rho\boxplus\sigma &:=\Ptr{B}{U_{\pi/4}(\rho\otimes\sigma) U^\dag_{\pi/4}},\\
        \rho\boxminus\sigma &:=\Ptr{A}{U_{\pi/4}(\rho\otimes\sigma) U^\dag_{\pi/4}},
    \end{aligned}
\end{equation}
where $\rho$ and $\sigma$ are bosonic states defined on subsystems $A$ and $B$, respectively.

\begin{figure}
    \centering
    \includegraphics[width=0.99 \linewidth]{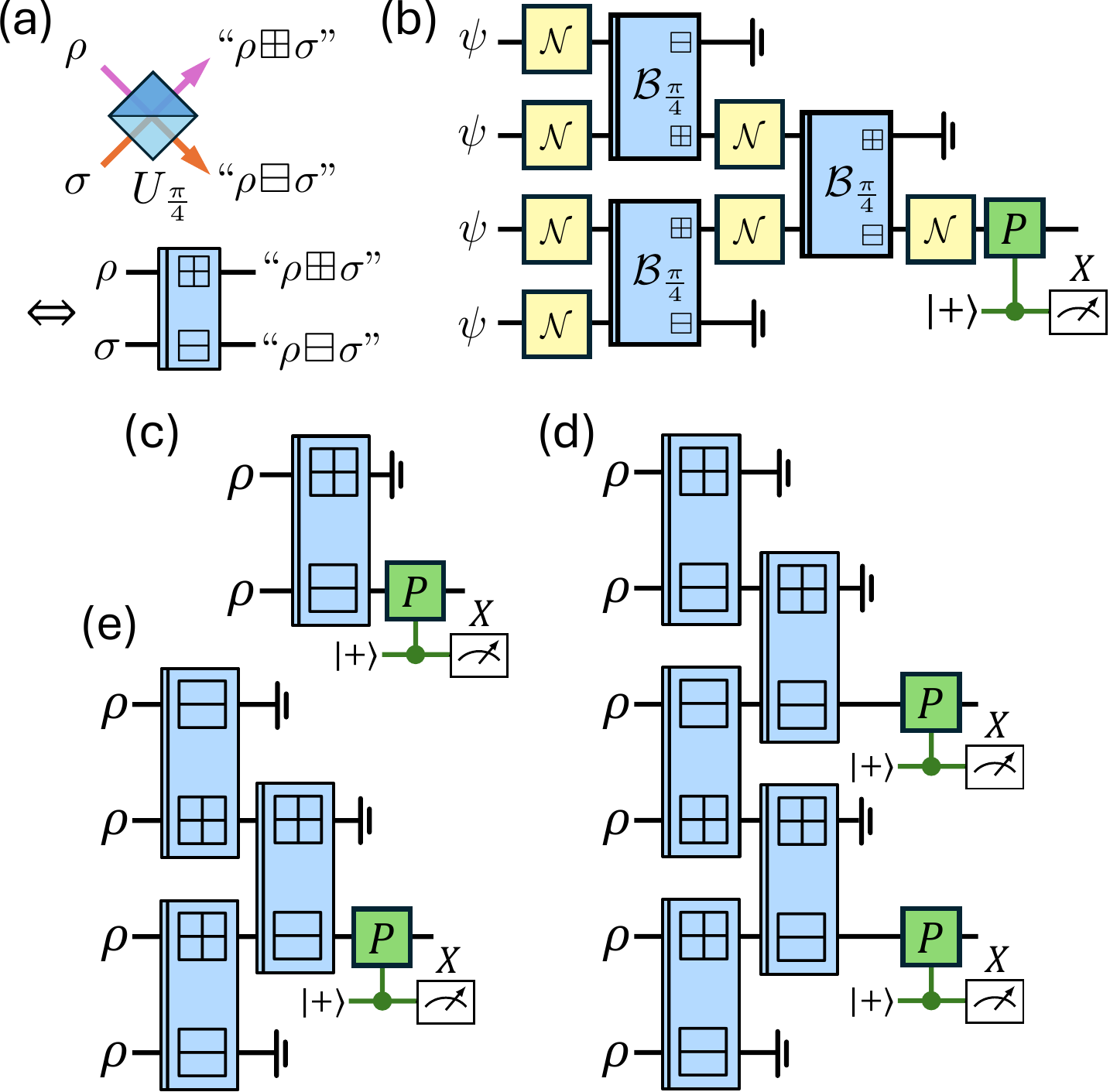}
    \caption{In this figure, $\psi$ denotes a pure bosonic state, while $\rho$ and $\sigma$ represent general bosonic states. The green box with green lines represents the controlled-$P$ gate. (a) A 50:50 beam splitter entangles the input state $\rho \otimes \sigma$, such that the \textit{reduced} output states correspond to $\rho \boxplus \sigma$ and $\rho \boxminus \sigma$. The non-Gaussianity of $\rho$ is inferred from the correlations between these two outputs. The quantum gate $U_{\pi/4}$ is depicted as a blue box, with $\boxplus$ and $\boxminus$ indicating the output directions. (b) The circuit for measuring $\NGE_{2,1}(\psi)$ of an unknown pure state $\psi$ is based on a beam splitter and parity measurements. It represents a noisy version of panel (e), in which a noisy beam splitter $\CBB_{\pi/4}$ and a noisy channel $\CNN$ are applied. Panels (c), (d), and (e) show swap-test-based circuits designed to estimate the non-Gaussianity $d_F(\rho)$ of a general, unknown mixed state $\rho$.
    }
    \label{fig:BS_protocol}
\end{figure}

\section{Main results}

\subsection{Pure states}
We begin with a classical model where two signals, $X$ and $Y$, are added. The output $X + Y$ corresponds to the classical convolution of $X$ and $Y$. Let $H(\cdot)$ denote the Shannon entropy of a random variable. Then, the convolution satisfies the fundamental subadditivity inequality:
\begin{align}\label{ineq:CConv}
    H(X+Y)\leq H(X)+H(Y),
\end{align}
If $Y$ is an independent and identically distributed copy of $X$, this simplifies to $H(X + X) \leq 2H(X)$.

In CV quantum systems, the analogs of $X$ and $Y$ are quadrature operators $\hat{x}$ and $\hat{y}$. As discussed earlier, the quantum analog of convolution is implemented by a beam splitter acting on two states (Eq.~\eqref{eq:QConv}). In contrast to the classical case, we find that non-Gaussian states can violate the subadditivity inequality in Eq.~\eqref{ineq:CConv} under quantum convolution. This violation is a key observation that motivates our test and measure of non-Gaussianity.

\begin{thm}[Violation of the Subadditive Inequality for Convolution]
\label{thm:pure}
Given a pure bosonic state $\psi$, the following subadditive inequality:
\begin{align}\label{ineq:QCONV}
S(\psi \boxplus \psi) \leq 2S(\psi),
\end{align}
holds if and only if $\psi$ is a Gaussian state, where $S(\rho)=-\Tr{\rho\log_2 \rho}$ is the
von Neumann entropy (base 2). Hence, the subadditive inequality \eqref{ineq:QCONV} is violated if and only if $\psi$ is a non-Gaussian state.
\end{thm}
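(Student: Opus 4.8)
The plan is to prove both directions via the behavior of von Neumann entropy under the beam-splitter convolution $\boxplus$. For the "only if" direction, I would start from the known fact (the quantum entropy-power / entropic subadditivity results for $\boxplus$; see \cite{Cushen71,Datta21,beigi2023towards,BGJ23a}) that the convolution is entropy-nondecreasing and that equality $S(\psi\boxplus\psi)=2S(\psi)$ in the relevant inequality chain forces rigidity. Concretely, since $\psi$ is pure, $S(\psi)=0$, so the claimed inequality $S(\psi\boxplus\psi)\le 2S(\psi)=0$ is equivalent to $S(\psi\boxplus\psi)=0$, i.e. $\psi\boxplus\psi$ is again pure. The first key step is therefore a purity lemma: $\psi\boxplus\psi$ is pure if and only if $U_{\pi/4}(\psi\otimes\psi)U_{\pi/4}^\dagger$ is a product state across the $A|B$ cut. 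The second step is to characterize which pure states $\psi$ have the property that the $50{:}50$ beam splitter sends $\psi\otimes\psi$ to a product state — and to show this forces $\psi$ Gaussian.

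For that characterization I would use the stellar-rank / Hudson-theorem circle of ideas. One clean route: pass to the characteristic function $\chi_\psi(\xi)=\Tr{\psi\, D(\xi)}$. The beam splitter acts on characteristic functions by the linear symplectic map $(\xi_A,\xi_B)\mapsto\big((\xi_A+\xi_B)/\sqrt2,(\xi_B-\xi_A)/\sqrt2\big)$, so $U_{\pi/4}(\psi\otimes\psi)U_{\pi/4}^\dagger$ has characteristic function $\chi_\psi\!\big((\xi_A+\xi_B)/\sqrt2\big)\,\chi_\psi\!\big((\xi_B-\xi_A)/\sqrt2\big)$. Requiring this to factor as $f(\xi_A)g(\xi_B)$ is a functional equation; writing $\chi_\psi=e^{h}$ near the origin (valid since $\chi_\psi(0)=1$), it becomes $h\big((\xi_A+\xi_B)/\sqrt2\big)+h\big((\xi_B-\xi_A)/\sqrt2\big)=\phi(\xi_A)+\psi(\xi_B)$, a Cauchy-type equation whose solutions (under the regularity that characteristic functions of trace-class operators enjoy) are quadratic polynomials in $\xi$. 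A quadratic log-characteristic function is exactly the definition of a Gaussian state. The "if" direction is then immediate: Gaussian states are closed under $\boxplus$, so $\psi$ Gaussian and pure implies $\psi\boxplus\psi$ Gaussian and pure, hence $S(\psi\boxplus\psi)=0=2S(\psi)$, and the inequality holds with equality.

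I would organize the writeup as: (i) reduce to purity of $\psi\boxplus\psi$; (ii) reduce purity of the reduced state to product structure of the global state after $U_{\pi/4}$ (Schmidt-rank-one argument); (iii) translate to the characteristic-function functional equation and solve it, concluding Gaussianity; (iv) dispatch the converse. Alternatively, step (iii) can be replaced by invoking an existing rigidity statement: the $\boxplus$-convolution of two pure states is pure only in the Gaussian case, which may already be available in the cited self-convolution literature — if so, the proof collapses to steps (i), (ii), and a citation.

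The main obstacle I anticipate is step (iii): justifying that the only sufficiently-regular solutions of the additive functional equation are quadratic, and handling the possibility that $\chi_\psi$ vanishes somewhere (so $h=\log\chi_\psi$ is only locally defined). One must argue that the product/factorization condition propagates the local conclusion globally, or equivalently rule out $\chi_\psi$ having zeros for the states in question; the cleanest fix is to work in a neighborhood of the origin, deduce that the covariance data forces $\psi$ to be Gaussian on the level of moments, and then invoke that a pure state with Gaussian-consistent low-order structure and the stated convolution rigidity must in fact be Gaussian (e.g. via stellar rank $0$). The entropy-theoretic inputs (monotonicity of $S$ under $\boxplus$ and the equality conditions) are standard, so the real work is this analytic rigidity.
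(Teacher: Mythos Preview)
Your reduction steps (i), (ii), (iv) coincide with the paper's: since $\psi$ is pure, the inequality is equivalent to $S(\psi\boxplus\psi)=0$, which by the Schmidt-rank-one argument is equivalent to $U_{\pi/4}(\psi\otimes\psi)U_{\pi/4}^\dagger$ being a product state; the converse is immediate because Gaussian pure states are closed under $\boxplus$. So the structure is right, and the content is entirely in step (iii).

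For step (iii) the paper takes a different route than your Cauchy-equation-on-$\log\chi_\psi$ approach, and its route is what cleanly resolves the obstacle you flag. Rather than taking a logarithm, the paper derives from the product condition (after reducing to zero mean) the \emph{multiplicative} self-similarity
\[
\Xi_\psi(\xi)=\Xi_\psi\!\left(\tfrac{\xi}{2}\right)^{3}\Xi_\psi\!\left(-\tfrac{\xi}{2}\right),
\]
iterates it to obtain $\Xi_\psi(\xi)=\Xi_\psi(\xi/\sqrt{m_k})^{a_k}\Xi_\psi(-\xi/\sqrt{m_k})^{b_k}$ with $m_k=4^k$, $a_k+b_k=m_k$, and then---following the Cushen--Hudson quantum central limit argument---restricts to a ray $g(x)=\Xi_\psi(x\xi)$, uses the second-order expansion $g(x)=1-\tfrac14(\xi^\T\Gamma\xi)x^2+o(x^2)$ at the origin, and takes $k\to\infty$ to get $\Xi_\psi(\xi)=\exp(-\tfrac14\xi^\T\Gamma\xi)$. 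The point is that this limit argument never needs $\chi_\psi$ to be nonvanishing anywhere except arbitrarily close to the origin (where it is, since $\chi_\psi(0)=1$ and $\chi_\psi$ is continuous), so the zero problem simply does not arise.

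Your proposed fix---work locally near $0$, deduce quadraticity of $h$ there, then upgrade via stellar rank or ``moments''---is not obviously complete: knowing that $\log\chi_\psi$ is quadratic on a neighborhood of the origin does not by itself force global Gaussianity without an analyticity or identity-theorem input, and characteristic functions of generic bosonic states are not analytic. The iteration-and-limit argument is the standard device precisely because it converts the local expansion into the global conclusion without taking logs. If you want to keep your functional-equation framing, the cleanest repair is to stay multiplicative (no logarithm), iterate the scaling, and invoke the Cushen--Hudson limit exactly as the paper does.
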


The detailed proof of the above theorem and other results for general pure states in this section are presented in Appendix~\ref{appen:A}.
It is worth noting that the results discussed above also apply to stabilizer states in  qudit/qubit systems~\cite{BGJ24a, BGJ23a,BGJ23b,BGJ23c}, as well as fermionic Gaussian states in fermionic systems~\cite{lyu2024fermionic,lyu2024fermionic_G} with different definitions of quantum convolutions.

\begin{figure}
    \centering
    \includegraphics[width=0.99 \linewidth]{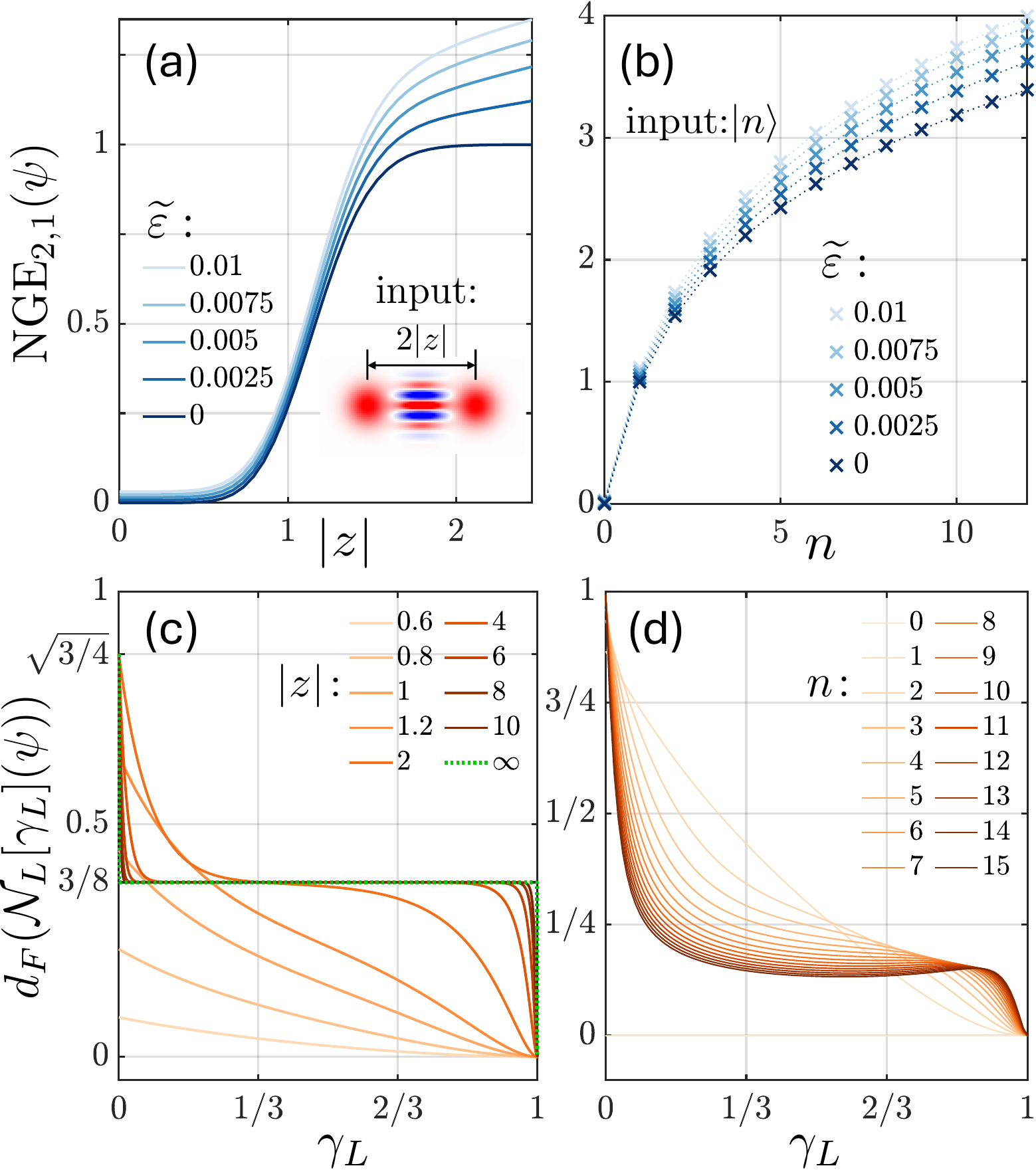}
    \caption{Panels (a) and (b) show simulations of $\NGE_{2,1}$ measured using the noisy circuits from Fig.~\ref{fig:BS_protocol}(b), with a pure input state $\psi$. Panels (c) and (d) display the exact values of $d_F$, measured using noise-free circuits, where the input state is the mixed state $\CNN_L[\gamma_L](\psi)$. The input $\psi$ is chosen as a two-component cat state $\propto \ket{z} + \ket{-z}$ (inset) in panels (a) and (c), and as a Fock state $\psi = \proj{n}$ with particle number $n$ in panels (b) and (d).
    }
    \label{fig:curves}
\end{figure}

The above theorem shows that $\psi \boxplus \psi$ is a pure state if and only if $\psi$ is Gaussian. This equivalence reduces the task of determining whether a pure state $\psi$ is Gaussian to estimating the purity of the beam splitter output $\psi \boxplus \psi$. 
It is well known that the purity of quantum states can be estimated via swap tests. Remarkably, by exploiting quantum interference at a beam splitter, the swap test can be implemented without complex circuitry. The method involves interfering two identical copies of the state using an additional $50:50$ beam splitter, followed by a parity measurement on one of the output modes. 
This parity measurement directly reflects the overlap between the two input states, thereby enabling an estimate of the purity~\cite{DaleyPRL12,AlvesPRL04,EkertPRL02,BovinoPRL05,Islam15}.
Formally, let the number operator be $\hat{n}_{i,B}:=\hat{a}_{i,B}^\dag \hat{a}_{i,B}$, and define the total parity operator as $P:=(-1)^{\sum_{i=1}^N\hat{n}_{i,B}}$. Then the average parity is related to the overlap of states by the following equation:
\begin{equation}\label{eq:innerprod_parity}
     \Tr{\rho\sigma} = \Tr{(\rho\boxminus \sigma)P}\;.
\end{equation}
This arises because a 50:50 beam splitter transforms the annihilation operators of two bosonic modes in subsystems $A$ and $B$ as follows: $\hat{a}_{i,A} \to (\hat{a}_{i,A} + \hat{a}_{i,B})/\sqrt{2}$, and $\hat{a}_{i,B} \to (\hat{a}_{i,B} - \hat{a}_{i,A})/\sqrt{2}$. Under this transformation, the symmetric subspace of the two modes $\hat{a}_{i,A}$ and $\hat{a}_{i,B}$ is mapped to the subspace of states with an even number of particles in mode $\hat{a}_{i,B}$, while the antisymmetric subspace is mapped to states with an odd number of particles in the same mode. Thus, measuring the parity in mode $\hat{a}_{i,B}$ effectively distinguishes between the symmetric and antisymmetric components, enabling direct estimation of the purity $\Tr{\rho^2}$ or the inner product $\Tr{\rho\sigma}$, depending on the input states.

Based on the above observation, we propose a method to test and measure non-Gaussianity that leverages quantum convolution and purity measurements implemented with beam splitters. This protocol efficiently utilizes only four copies of the input states, as detailed in the following steps.
The goal of this protocol is to obtain the parity of $(\psi\boxplus \psi)\boxminus(\psi\boxplus \psi)$, which is equivalent to the purity of $\psi\boxplus \psi$: 
\begin{align}\label{eq:aver_par}
\langle P\rangle
=\Tr{(\psi\boxplus \psi)^2}.
\end{align}
Hence, based on the above analysis, we have the following results regarding the parity. 

\begin{center}
\begin{tcolorbox}[width=8cm,height=3.5cm,title=Bosonic Gaussian Test]\label{prot:BG}
(1) Prepare 4 copies of the pure states $\psi$ and perform the 50:50 beam splitter to get 2 copies of  $\psi\boxplus\psi$.

(2) Perform the 50:50 beam splitter on 2 copies of  $\psi\boxplus\psi$, and measure the parity on the second half of the system. (See Fig.~\ref{fig:BS_protocol})
\end{tcolorbox}
\end{center}

\begin{thm}\label{thm:Gauss_Test}
The average parity $\langle P\rangle=1$ iff $\psi$ is a Gaussian state. In addition, 
if the maximal overlap with the Gaussian state is $\max_{\phi_G}|\iinner{\psi}{\phi_G}|^2=1-\epsilon$, then $\langle P\rangle\geq (1-\epsilon)^2$.
\end{thm}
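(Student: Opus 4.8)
The first claim I would obtain directly from Theorem~\ref{thm:pure}. Because $\psi$ is pure, $S(\psi)=0$, so \eqref{ineq:QCONV} reads $S(\psi\boxplus\psi)\le 0$; since von Neumann entropy is nonnegative this is equivalent to $S(\psi\boxplus\psi)=0$, i.e.\ to $\psi\boxplus\psi$ being pure, which by Theorem~\ref{thm:pure} holds iff $\psi$ is Gaussian. On the other hand, the identity \eqref{eq:aver_par} says $\langle P\rangle=\Tr{(\psi\boxplus\psi)^2}$ is exactly the purity of $\psi\boxplus\psi$, and a density operator has purity $1$ iff it is pure. Composing these equivalences gives $\langle P\rangle=1\iff\psi$ is Gaussian.

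For the quantitative statement, let $\phi_G$ be a pure Gaussian state realizing (or, via a limiting argument over the Gaussian family, approaching) the maximum $|\iinner{\psi}{\phi_G}|^2=1-\epsilon$. I would build the bound from three observations. First, $\phi_G\boxplus\phi_G$ is again \emph{pure}: this is Theorem~\ref{thm:pure} applied to the Gaussian input $\phi_G$ (equivalently, two identical pure Gaussian states interfered on a 50:50 beam splitter leave both outputs in pure Gaussian states), so we may write $\phi_G\boxplus\phi_G=\proj{\chi}$. Second, the overlap with $\proj{\chi}$ cannot be too small: the map $\rho\mapsto\rho\boxplus\rho$ is the composition of the isometry $\rho\mapsto U_{\pi/4}(\rho\ot\rho)U^\dagger_{\pi/4}$ on two copies with a partial trace, hence a quantum channel, so the (squared) fidelity does not decrease under it, and multiplicativity of fidelity over tensor factors lets one estimate $\langle\chi|(\psi\boxplus\psi)|\chi\rangle=\Tr{(\psi\boxplus\psi)(\phi_G\boxplus\phi_G)}$ from below in terms of $1-\epsilon$. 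Third, Cauchy--Schwarz for the Hilbert--Schmidt inner product gives $\Tr{(\psi\boxplus\psi)\proj{\chi}}\le\sqrt{\Tr{(\psi\boxplus\psi)^2}}$, so with \eqref{eq:aver_par} one gets $\langle P\rangle=\Tr{(\psi\boxplus\psi)^2}\ge\big(\Tr{(\psi\boxplus\psi)\proj{\chi}}\big)^2$; feeding in the second observation then yields the advertised bound $\langle P\rangle\ge(1-\epsilon)^2$.

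The main obstacle is the bookkeeping of constants in the middle step. Applying the data-processing inequality in one shot to $\psi^{\ot2}$ versus $\phi_G^{\ot2}$ tends to cost a factor in the exponent, so to land cleanly on $\langle\chi|(\psi\boxplus\psi)|\chi\rangle\ge 1-\epsilon$ (which is what feeds the Cauchy--Schwarz step to give exactly $(1-\epsilon)^2$) one should organize the estimate more carefully --- for instance by splitting $\psi\boxplus\psi\to\phi_G\boxplus\psi\to\phi_G\boxplus\phi_G$ into two ``one-sided'' convolutions, applying data processing (now with the other input held fixed as an ancilla) to each leg, possibly combined with the triangle inequality for the sine distance $\sqrt{1-F^2}$. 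A secondary point worth checking carefully is the purity of $\phi_G\boxplus\phi_G$ for a general \emph{multimode} pure Gaussian state, since the beam splitter in \eqref{eq:U_BS} acts mode-by-mode across the two copies.
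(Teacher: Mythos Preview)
Your approach is exactly the paper's: for the first claim both of you reduce to purity of $\psi\boxplus\psi$ via Theorem~\ref{thm:pure} and \eqref{eq:aver_par}; for the second, both of you (i) take an optimal pure Gaussian $\phi_G$, (ii) observe that $\phi_G\boxplus\phi_G=:\phi'_G$ is pure, (iii) apply monotonicity of fidelity under the channel $\rho\mapsto\rho\boxplus\rho$ to the pair $\psi^{\ot2},\phi_G^{\ot2}$, and (iv) finish with Cauchy--Schwarz $\Tr{(\psi\boxplus\psi)^2}\ge\big(\langle\phi'_G|\psi\boxplus\psi|\phi'_G\rangle\big)^2$. Your secondary worry about the multimode case is unfounded: Theorem~\ref{thm:pure} (equivalently Theorem~\ref{thm:key}) already covers arbitrary $N$, so $\phi_G\boxplus\phi_G$ is pure for any pure Gaussian $\phi_G$.

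Your instinct about the exponent is the interesting point, and it is correct. The paper does precisely the ``one-shot'' step you flagged: from $F(\psi,\phi_G)^2=1-\epsilon$ one gets $F(\psi^{\ot2},\phi_G^{\ot2})^2=(1-\epsilon)^2$, hence by data processing $\langle\phi'_G|\psi\boxplus\psi|\phi'_G\rangle=F(\psi\boxplus\psi,\phi'_G)^2\ge(1-\epsilon)^2$, and then squaring via Cauchy--Schwarz yields $\langle P\rangle\ge(1-\epsilon)^4$. The final ``$\ge(1-\epsilon)^2$'' in the paper's displayed chain is a slip; the argument as written delivers $(1-\epsilon)^4$, not the stated $(1-\epsilon)^2$. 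Your proposed two-leg repair does not recover the missing factor either: applying data processing separately to $\psi\boxplus\psi\to\phi_G\boxplus\psi\to\phi_G\boxplus\phi_G$ and combining with the sine-distance triangle inequality gives at best $\langle\phi'_G|\psi\boxplus\psi|\phi'_G\rangle\ge 1-4\epsilon$, hence $\langle P\rangle\ge(1-4\epsilon)^2$, which is weaker than $(1-\epsilon)^4$ for small $\epsilon$. So drop the repair attempt; the clean direct argument you outlined is the right one, and it proves $\langle P\rangle\ge(1-\epsilon)^4$.
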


We can generalize the above idea to introduce some measures to quantify the bosonic non-Gaussianity in terms of R\'enyi entropies and higher-order quantum convolutions.
\begin{Def}
The $(\alpha,k)$ R\'enyi non-Gaussian entropy for any pure state $\ket{\psi}$ is
\begin{align}
    \NGE_{\alpha,k}(\psi)=S_{\alpha}(\boxplus^k\psi),
\end{align}
where $S_{\alpha}(\rho)=\frac{1}{1-\alpha}\log_2 \Tr{\rho^\alpha}$ is the  quantum R\'enyi entropy, and the higher-order quantum convolution is defined inductively as $\boxplus^0\psi=\psi$, and $\boxplus^{k}\psi=(\boxplus^{k-1}\psi)\boxplus\psi$ for any integer $k\geq 1$.
\end{Def}
By the definition, the average parity $\langle P\rangle$ in \eqref{eq:aver_par} is equivalent to $\NGE_{2,1}(\psi)$ via the relation $\langle P\rangle=2^{-\NGE_{2,1}(\psi)}$. Hence, the bosonic Gaussian test (as shown in Fig.~\ref{fig:BS_protocol}) also provides an experimental method to measure $\NGE_{2,1}(\psi)$.

\begin{prop}\label{prop:non_entrop}
The $(\alpha, k)$ R\'enyi non-Gaussian entropy $\NGE_{\alpha,k}(\psi)$ with $\alpha\in [0,+\infty]$ and $k\in \mathbb{Z}_+$ satisfies  the following properties:
 \begin{enumerate}[(1)]
    \item Faithfulness: $\NGE_{\alpha,k}(\psi)\geq 0$ with equality iff $\psi$ is a Gaussian state. 
    \item Gaussian invariance: $\NGE_{\alpha,k}(U_G\proj{\psi} U^\dag_G)=\NGE_{\alpha,k}(\psi)$ for any Gaussian unitary $U_G$. 
    \item Additivity under tensor product: $\NGE_{\alpha,k}(\psi_1\ot\psi_2)=\NGE_{\alpha,k}(\psi_1)+\NGE_{\alpha,k}(\psi_2)$. 
\end{enumerate}    
\end{prop}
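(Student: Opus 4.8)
The plan is to check the three items in turn, reducing each to one fact about higher-order convolution plus elementary properties of $S_\alpha$ that hold for all $\alpha\in[0,+\infty]$: namely $S_\alpha(\rho)\geq 0$ with equality iff $\rho$ is pure; $S_\alpha(\rho)$ depends only on the spectrum of $\rho$; and $S_\alpha(\rho\otimes\tau)=S_\alpha(\rho)+S_\alpha(\tau)$, since $\Tr{(\rho\otimes\tau)^\alpha}=\Tr{\rho^\alpha}\,\Tr{\tau^\alpha}$ (with the evident variants $\operatorname{rank}(\rho\otimes\tau)=\operatorname{rank}(\rho)\operatorname{rank}(\tau)$ for $\alpha=0$ and $\norm{\rho\otimes\tau}_\infty=\norm{\rho}_\infty\norm{\tau}_\infty$ for $\alpha=\infty$).

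For faithfulness, since $S_\alpha(\rho)=0$ iff $\rho$ is pure, it suffices to prove that $\boxplus^k\psi$ is pure if and only if $\psi$ is Gaussian, for every $k\geq 1$. The ``if'' direction I would do with covariance matrices: a pure Gaussian $\psi$ with covariance matrix $V$ gives $\psi\otimes\psi$ with covariance matrix $V\oplus V$, and conjugating by the 50:50 beam splitter of Eq.~\eqref{eq:QConv}---which on the two mode-blocks is the passive symplectic $\tfrac1{\sqrt2}\bigl(\begin{smallmatrix}I&I\\-I&I\end{smallmatrix}\bigr)$---maps $V\oplus V\mapsto V\oplus V$, so $\psi\boxplus\psi$ is again pure Gaussian (covariance matrix $V$), and by induction so is $\boxplus^k\psi$; hence $\NGE_{\alpha,k}(\psi)=0$. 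For the ``only if'' direction I would peel off convolutions one at a time using the quantum entropy power inequality for the beam splitter, $e^{S(\rho\boxplus\sigma)/N}\geq\tfrac12\bigl(e^{S(\rho)/N}+e^{S(\sigma)/N}\bigr)$ (cf.\ Appendix~\ref{appen:A}): if $\sigma$ is pure and $\rho\boxplus\sigma$ is pure then $e^{S(\rho)/N}\leq 1$, forcing $S(\rho)=0$; applying this with $\sigma=\psi$ and $\rho=\boxplus^{k-1}\psi, \boxplus^{k-2}\psi, \ldots$ reduces the claim to the purity of $\psi\boxplus\psi$, which is equivalent to $\psi$ being Gaussian by Theorem~\ref{thm:pure}. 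This last step is the only real obstacle, and the inequality is genuinely needed: purity is not monotone under $\boxplus$ (e.g.\ $\rho\boxplus(\text{vacuum})$ is strictly purer than a thermal $\rho$), so no elementary monotonicity argument works.

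For Gaussian invariance, I would show that $\boxplus^k(U_G\proj{\psi}U_G^\dag)$ is $\boxplus^k\psi$ conjugated by a Gaussian unitary, which forces equal spectra and hence equal $S_\alpha$. Writing $U_G=D(\alpha)S$ with $S$ symplectic and $D(\alpha)$ a displacement, I would treat the two factors separately. For $S$: the identity $\tfrac1{\sqrt2}\bigl(\begin{smallmatrix}I&I\\-I&I\end{smallmatrix}\bigr)(S_G\oplus S_G)=(S_G\oplus S_G)\tfrac1{\sqrt2}\bigl(\begin{smallmatrix}I&I\\-I&I\end{smallmatrix}\bigr)$ shows $U_{\pi/4}$ commutes with $S\otimes S$ up to a phase, so $(S\rho S^\dag)\boxplus(S\sigma S^\dag)=S(\rho\boxplus\sigma)S^\dag$; since every step of $\boxplus^k(S\psi S^\dag)$ has both inputs conjugated by the \emph{same} $S$, this propagates to $\boxplus^k(S\psi S^\dag)=S(\boxplus^k\psi)S^\dag$. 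For displacements: $U_{\pi/4}$ conjugates $D(\mu)\otimes D(\nu)$ to $D(\tfrac{\mu+\nu}{\sqrt2})\otimes D(\tfrac{\nu-\mu}{\sqrt2})$, so every convolution step merely conjugates the running state on $A$ by a displacement, giving $\boxplus^k(D(\alpha)\psi D(\alpha)^\dag)=D(\beta_k)(\boxplus^k\psi)D(\beta_k)^\dag$ for some $\beta_k$. Composing yields $\boxplus^k(U_G\proj{\psi}U_G^\dag)=D(\beta_k)S(\boxplus^k\psi)S^\dag D(\beta_k)^\dag$, as desired.

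For additivity, the point is that the beam splitter of Eq.~\eqref{eq:U_BS} acts modewise, so the beam splitter acting on two copies of $\psi_1\otimes\psi_2$ (an $(N_1+N_2)$-mode state) factorizes as (beam splitter on the two copies of the $\psi_1$-modes) $\otimes$ (beam splitter on the two copies of the $\psi_2$-modes); after reordering copies and partial tracing this gives $(\psi_1\otimes\psi_2)\boxplus(\psi_1\otimes\psi_2)=(\psi_1\boxplus\psi_1)\otimes(\psi_2\boxplus\psi_2)$, and by induction $\boxplus^k(\psi_1\otimes\psi_2)=(\boxplus^k\psi_1)\otimes(\boxplus^k\psi_2)$; additivity of $S_\alpha$ under tensor products then finishes it. Apart from the entropy-power-inequality step in faithfulness, every step is routine bookkeeping with covariance matrices, the modewise action of the beam splitter, and spectral properties of $S_\alpha$.
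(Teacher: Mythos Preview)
Your arguments for (2) and (3) are essentially the paper's: the commutation you derive by decomposing $U_G=D(\alpha)S$ is precisely Lemma~\ref{lem:GU_Com}, and the tensor factorization $(\psi_1\otimes\psi_2)\boxplus(\psi_1\otimes\psi_2)=(\psi_1\boxplus\psi_1)\otimes(\psi_2\boxplus\psi_2)$ is exactly what the paper cites for additivity.

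For (1) you actually go further than the paper, which simply asserts that $\boxplus^k\psi$ is pure iff $\psi$ is Gaussian and leaves the $k>1$ case implicit (Theorem~\ref{thm:pure} handles only $k=1$). Your peeling argument via the quantum entropy power inequality is a correct way to fill that gap, and your instinct that something beam-splitter-specific is required---that ``$\rho$ mixed, $\sigma$ pure $\Rightarrow\rho\boxplus\sigma$ mixed'' does not follow from generic channel reasoning---is on the mark. One correction, though: the parenthetical ``(cf.\ Appendix~\ref{appen:A})'' is a mis-citation, since nothing in that appendix states or implies the entropy power inequality; for that step you would need to invoke the external literature (K\"onig--Smith, or De~Palma--Mari--Giovannetti).
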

These properties guarantee that the   $(\alpha, k)$ R\'enyi non-Gaussian entropies can be used as  measures of non-Gaussianity for pure bosonic states.

If the quadratures of the input pure state $\psi$ have zero mean, then the protocol to test Gaussianity can be simplified by using three copies of the input states based on the following statement: A pure state $\psi$ with zero mean is Gaussian iff $\psi \boxplus \psi=\psi$. The proof of this statement, and the corresponding protocol to test Gaussianity for zero-mean pure states, are detailed in Appendix~\ref{appen:zero_mean}.

The protocol for measuring $\NGE_{2,1}$ is based on the swap test~\cite{barenco1997stabilization,BuhrmanPRL01,de2019quantum}. For an $N$-mode bosonic input state, estimating the average parity $\langle P \rangle$ within additive error $\varepsilon$ requires $\bigO(1/\varepsilon^2)$ copies of the input state. Assuming a two-mode beam splitter is used as a quantum gate, each run of the protocol involves $\bigO(N)$ gates, while the circuit depth remains $\bigO(1)$, since all beam splitters act on disjoint mode pairs and can be executed in parallel. More generally, the time complexity for measuring $\NGE_{2,k}$ scales as $\bigO(\log k)$ when parallelization is employed.
Our approach offers a significant advantage over methods requiring quantum state tomography, which demand $\bigO(e^{\kappa n}/\varepsilon'^2)$ copies of the state to reconstruct it up to trace distance error $\varepsilon'$, where $\kappa>0$ is a constant dependent on the Hilbert space dimension truncation.

\subsection{Mixed states}
We have studied the measurement of bosonic non-Gaussianity for pure states. However, in real experiments, the quantum state $\rho$ is generally mixed. We now extend our method to this general case with the following theorem.

\begin{thm}\label{thm:mix}
A bosonic state $\rho$ is Gaussian iff $U_{\pi/4}(\rho\otimes \rho) U^\dag_{\pi/4}$ is a product state.
\end{thm}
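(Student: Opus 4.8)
The plan is to treat both directions with the Wigner characteristic function $\chi_\rho(\xi)=\Tr{\rho\, e^{i\xi^\T\Omega\hat{x}}}$, using that $\rho$ is Gaussian if and only if $\log\chi_\rho$ is a quadratic polynomial. Since $U_{\pi/4}$ is a Gaussian unitary, conjugating a tensor product of displacement operators by it implements the orthogonal symplectic change of variables $(\xi_A,\xi_B)\mapsto\big(\tfrac{\xi_A-\xi_B}{\sqrt2},\tfrac{\xi_A+\xi_B}{\sqrt2}\big)$, so the output $\tau:=U_{\pi/4}(\rho\otimes\rho)U_{\pi/4}^\dagger$ obeys
\begin{equation*}
\chi_{\tau}(\xi_A,\xi_B)=\chi_\rho\!\Big(\tfrac{\xi_A-\xi_B}{\sqrt2}\Big)\,\chi_\rho\!\Big(\tfrac{\xi_A+\xi_B}{\sqrt2}\Big).
\end{equation*}
For the ``only if'' direction I would plug $\log\chi_\rho(u)=-\tfrac14 u^\T M u+i m^\T u$ into this identity: the quadratic part collapses to $\xi_A^\T M\xi_A+\xi_B^\T M\xi_B$ and the linear part to $\sqrt2\, m^\T\xi_A$, with every $\xi_A$--$\xi_B$ cross term cancelling because the two marginals of $\rho\otimes\rho$ coincide, so $\chi_\tau$ factorizes and $\tau$ is a (Gaussian) product state; equivalently, the beam splitter sends the covariance matrix $V\oplus V$ to the block-diagonal $V\oplus V$ and the mean to $(\sqrt2\, m,0)$. (When $\rho=\psi$ is pure this just re-derives Theorem~\ref{thm:pure}: $\tau$ is then pure, and a pure state is product iff its marginals are, i.e.\ iff $S(\psi\boxplus\psi)=0$.)

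For the converse I would suppose $\tau=\tau_A\otimes\tau_B$, so the right-hand side above equals $\chi_{\tau_A}(\xi_A)\chi_{\tau_B}(\xi_B)$; the substitution $u=\tfrac{\xi_A-\xi_B}{\sqrt2}$, $v=\tfrac{\xi_A+\xi_B}{\sqrt2}$ then gives the functional equation
\begin{equation*}
\chi_\rho(u)\,\chi_\rho(v)=g(u+v)\,h(v-u)\qquad\text{for all }u,v,
\end{equation*}
with $g,h$ continuous and $g(0)=h(0)=1$. This is exactly the Skitovich--Darmois configuration --- the two beam-splitter outputs are, up to the $1/\sqrt2$ scaling, the ``sum'' and ``difference'' of two i.i.d.\ copies and are being assumed independent --- transplanted to quantum characteristic functions. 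To solve it, I would note that $\chi_\rho$ is continuous with $\chi_\rho(0)=1$, hence nonvanishing near the origin; taking logarithms there and specializing to $v=\pm u$ reduces the equation to a Pexider-type identity whose continuous solutions force $\log\chi_\rho$ (and $\log g$, $\log h$) to agree with a quadratic polynomial on a neighborhood of $0$.

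I expect the main obstacle to be promoting this local statement to a \emph{global} one --- $\chi_\rho=\exp(\text{quadratic})$ on all of $\mathbb{R}^{2N}$, hence $\rho$ Gaussian --- because the characteristic function of a non-Gaussian state can vanish (as for Fock states), so logarithms are not available globally. I would attack this using that the functional equation holds for \emph{all} $u,v$: one can propagate the exponential-quadratic form outward from the origin along the equation and rule out zeros in the process, and for states regular enough that $\chi_\rho$ is real-analytic, agreement with an entire exponential-quadratic function on an open set is automatically global.

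As a slicker alternative for states of finite von Neumann entropy, I would instead argue entropically: unitarity gives $S(\tau)=2S(\rho)$, and $\tau=\tau_A\otimes\tau_B$ gives $S(\rho\boxplus\rho)+S(\rho\boxminus\rho)=2S(\rho)$; writing $\rho\boxminus\rho=(\mathcal P\rho\mathcal P)\boxplus\rho$ for the parity reflection $\mathcal P$ (Gaussian, so $S(\mathcal P\rho\mathcal P)=S(\rho)$), the quantum entropy power inequality forces $S(\rho\boxplus\rho)\ge S(\rho)$ and $S(\rho\boxminus\rho)\ge S(\rho)$, hence equality in both, and the equality conditions of the quantum EPI identify this with $\rho$ being Gaussian.
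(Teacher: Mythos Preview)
Your ``only if'' direction and the functional-equation setup for the converse are both correct, and you have correctly located the crux: promoting the local Pexider/Darmois analysis to a global statement. But your proposed remedies do not close this gap. Real-analyticity of $\chi_\rho$ is not available for generic bosonic states, and ``propagating outward along the equation and ruling out zeros'' \emph{is} the hard part of any Darmois--Skitovich argument, not a step one can gesture at. The EPI alternative has the same defect in a different guise: the equality case of the quantum entropy power inequality---that $S(\rho\boxplus\rho)=S(\rho)$ forces $\rho$ Gaussian---is not an off-the-shelf citable fact, and establishing it is of essentially the same difficulty as the theorem itself; it also imposes $S(\rho)<\infty$, which the paper's argument does not need.

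The paper's route sidesteps all of this by observing that your $g$ and $h$ are not free functions: setting $u=v$ in your equation gives $g(2u)=\chi_\rho(u)^2$, and $v=-u$ gives $h(-2u)=\chi_\rho(u)\chi_\rho(-u)$. After a zero-mean reduction by displacement, substituting these back yields the \emph{self-referential} recursion
\[
\chi_\rho(\xi)=\chi_\rho(\xi/2)^3\,\chi_\rho(-\xi/2),
\]
which iterates to $\chi_\rho(\xi)=\chi_\rho(2^{-k}\xi)^{a_k}\chi_\rho(-2^{-k}\xi)^{b_k}$ with $a_k+b_k=4^k$. Now fix $\xi$ and use that $t\mapsto\chi_\rho(t\xi)$ is a genuine one-dimensional classical characteristic function with finite second moment, hence has expansion $1-ct^2+o(t^2)$ near $0$; sending $k\to\infty$ (the Cushen--Hudson quantum-CLT mechanism) gives $\chi_\rho(\xi)=e^{-c}$ directly for every $\xi$, with $c=c(\xi)$ quadratic. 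No logarithms, no zero-avoidance, no analyticity assumption---the iteration trades the global step for a local Taylor expansion, and this is precisely the idea your proposal is missing.
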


For simplicity, we define $\rho_{AB} := U_{\pi/4} (\rho \otimes \rho) U^\dagger_{\pi/4}$, $\rho_A := \rho \boxplus \rho$, and $\rho_B := \rho \boxminus \rho$. The theorem above indicates that the deviation between $\rho_{AB}$ and $\rho_A \otimes \rho_B$ can serve as a measure of non-Gaussianity. Accordingly, a natural generalization of $\NGE_{\alpha,k}$ is the relative entropy $D_\alpha(\rho_{AB} \| \rho_A \otimes \rho_B)$, which functions well as a non-Gaussianity monotone (see Appendix~\ref{appen:B} for examples involving the $\alpha$-mutual information of non-Gaussianity).
However, to the best of the authors’ knowledge, relative entropy and its generalizations do not lend themselves to convenient measurement without the usage of full quantum state tomography. This is because relative entropy is typically defined in terms of non-analytic functions of density matrices, making it unsuitable for direct evaluation using a finite set of swap-test-like circuits without approximation.
To address this, we propose a non-Gaussianity measure, termed the Frobenius measure, which enables efficient measurement:
\begin{equation}
    d_F(\rho) := \|\rho_{AB} - \rho_A\otimes \rho_B\|_F \;,
\end{equation}
where $\|M\|_F := \sqrt{\Tr{M^\dagger M}}$ denotes the Frobenius norm. An advantage of this measure is that $d_F(\rho)^2$ decomposes into a sum of three terms: $\Tr{\rho_{AB}^2}$, $\Tr{\rho_A^2} \Tr{\rho_B^2}$, and $-2\Tr{\rho_{AB}(\rho_A \otimes \rho_B)}$, which can be measured using the swap-test-based circuits shown in Fig.~\ref{fig:BS_protocol}(c), (e), and (d), respectively. In particular, the last term is obtained from the expectation value of the product of two parity measurements in Fig.~\ref{fig:BS_protocol}(d). In other words, the previous analysis regarding protocol complexity still holds. Example values of $d_F$ are presented in Fig.~\ref{fig:curves}(c,d).

\begin{prop}\label{thm:main_mix}
The Frobenius measure $d_F(\rho)$ satisfies the following properties: 
\begin{enumerate}[(1)]
    \item Faithfulness: $d_F(\rho)\geq 0$ with equality iff $\rho$ is a Gaussian state. 
    \item Gaussian invariance: $d_F(U_G\rho U^\dag_G) = d_F(\rho)$ 
    for any Gaussian unitary $U_G$.
    \item Subadditivity under tensor product: $d_F(\rho\otimes \sigma)\leq d_F(\rho)+ d_F(\sigma)$. 
\end{enumerate}
\end{prop}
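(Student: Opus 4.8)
The plan is to handle the three properties separately, in each case reducing the claim to a structural feature of the map $\rho\mapsto(\rho_{AB},\rho_A,\rho_B)$ together with an elementary fact about the Frobenius norm. \emph{Faithfulness} is immediate: since $\|\cdot\|_F$ is a norm, $d_F(\rho)=\|\rho_{AB}-\rho_A\otimes\rho_B\|_F\ge 0$, with equality iff $\rho_{AB}=\rho_A\otimes\rho_B$, i.e.\ iff $U_{\pi/4}(\rho\otimes\rho)U^\dag_{\pi/4}$ is a product state, which by Theorem~\ref{thm:mix} happens exactly when $\rho$ is Gaussian.

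For \emph{Gaussian invariance} the key step is the covariance of the beam-splitter map under Gaussian unitaries: I would show that for any Gaussian unitary $U_G$ there exist unitaries $V_A,V_B$ (depending only on $U_G$) with $U_{\pi/4}(U_G\otimes U_G)U^\dag_{\pi/4}=V_A\otimes V_B$. This is verified at the level of symplectic data: if $U_G$ corresponds to symplectic matrix $S$ and displacement $d$, then the beam-splitter symplectic $B=\tfrac{1}{\sqrt{2}}\left(\begin{smallmatrix}I_{2N}&I_{2N}\\-I_{2N}&I_{2N}\end{smallmatrix}\right)$ commutes with $S\oplus S$ and maps the displacement $(d,d)$ to $(\sqrt{2}\,d,0)$, so the conjugate is again Gaussian, now with block-diagonal symplectic $S\oplus S$ and displacement carried entirely by $A$; hence it factorizes as a product of Gaussian unitaries $V_A\otimes V_B$. (This is the same covariance underlying Proposition~\ref{prop:non_entrop}(2) and may be quoted rather than re-derived.) Granting this, conjugation by $V_A\otimes V_B$ commutes with the partial traces, so $(U_G\rho U_G^\dag)_{AB}=(V_A\otimes V_B)\rho_{AB}(V_A\otimes V_B)^\dag$, $(U_G\rho U_G^\dag)_{A}=V_A\rho_A V_A^\dag$, $(U_G\rho U_G^\dag)_{B}=V_B\rho_B V_B^\dag$, and therefore $(U_G\rho U_G^\dag)_{AB}-(U_G\rho U_G^\dag)_A\otimes(U_G\rho U_G^\dag)_B=(V_A\otimes V_B)\bigl(\rho_{AB}-\rho_A\otimes\rho_B\bigr)(V_A\otimes V_B)^\dag$; unitary invariance of $\|\cdot\|_F$ then gives $d_F(U_G\rho U_G^\dag)=d_F(\rho)$.

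For \emph{subadditivity} I would first record how the construction behaves under tensor products. Writing $\rho\otimes\sigma$ on the disjoint mode sets of $\rho$ and $\sigma$, the beam splitter entangling two copies of $\rho\otimes\sigma$ factors, up to a fixed permutation of tensor factors (an isometry for $\|\cdot\|_F$), into the beam splitter on the two copies of $\rho$ tensored with the one on the two copies of $\sigma$. Hence $(\rho\otimes\sigma)_{AB}\cong\rho_{AB}\otimes\sigma_{AB}$, $(\rho\otimes\sigma)_A=\rho_A\otimes\sigma_A$, $(\rho\otimes\sigma)_B=\rho_B\otimes\sigma_B$, and after rearranging factors $d_F(\rho\otimes\sigma)=\|\rho_{AB}\otimes\sigma_{AB}-(\rho_A\otimes\rho_B)\otimes(\sigma_A\otimes\sigma_B)\|_F$. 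Setting $X=\rho_{AB}$, $X_0=\rho_A\otimes\rho_B$, $Y=\sigma_{AB}$, $Y_0=\sigma_A\otimes\sigma_B$ and telescoping $X\otimes Y-X_0\otimes Y_0=(X-X_0)\otimes Y+X_0\otimes(Y-Y_0)$, the triangle inequality plus multiplicativity of the Frobenius norm under tensor products gives $d_F(\rho\otimes\sigma)\le\|X-X_0\|_F\|Y\|_F+\|X_0\|_F\|Y-Y_0\|_F$. Finally $\|Y\|_F^2=\Tr{\sigma_{AB}^2}\le 1$ since $\sigma_{AB}$ is a state, and $\|X_0\|_F^2=\Tr{\rho_A^2}\Tr{\rho_B^2}\le 1$ since $\rho_A,\rho_B$ are states; with $\|X-X_0\|_F=d_F(\rho)$ and $\|Y-Y_0\|_F=d_F(\sigma)$ this is precisely $d_F(\rho\otimes\sigma)\le d_F(\rho)+d_F(\sigma)$.

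The only genuinely nontrivial ingredient is the covariance identity $U_{\pi/4}(U_G\otimes U_G)U^\dag_{\pi/4}=V_A\otimes V_B$ used in the second part — in particular, checking that the conjugate is a \emph{product} unitary, not merely a Gaussian unitary across $AB$, which is the bookkeeping of symplectic blocks and displacement vectors sketched above. The first and third parts are then essentially formal, the one substantive point in the third being that the purities $\Tr{\sigma_{AB}^2}$ and $\Tr{\rho_A^2}\Tr{\rho_B^2}$ are at most $1$, which is what prevents the tensor-product telescoping from acquiring an extra factor.
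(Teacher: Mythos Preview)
Your proposal is correct and follows essentially the same route as the paper's proof: faithfulness via Theorem~\ref{thm:mix}, Gaussian invariance via the covariance identity $U_{\pi/4}(U_G\otimes U_G)=(V_A\otimes V_B)U_{\pi/4}$ (the paper's Lemma~\ref{lem:GU_Com}) plus unitary invariance of $\|\cdot\|_F$, and subadditivity via the same telescoping $X\otimes Y-X_0\otimes Y_0=(X-X_0)\otimes Y+X_0\otimes(Y-Y_0)$ combined with the triangle inequality. You are actually slightly more explicit than the paper in two places: you spell out the purity bounds $\|\sigma_{AB}\|_F\le 1$ and $\|\rho_A\otimes\rho_B\|_F\le 1$ that justify dropping the tensor factors (the paper's proof silently uses these), and you note the mode-reordering isometry needed to identify $(\rho\otimes\sigma)_{AB}$ with $\rho_{AB}\otimes\sigma_{AB}$.
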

The proof of the above proposition is given in Appendix~\ref{appen:B}. This result establishes that the Frobenius measure $d_F$ is a valid and experimentally accessible measure of non-Gaussianity. However, we emphasize that $d_F$ is not a non-Gaussianity monotone: the inequality $d_F(\rho) \ge d_F(\Lambda_G(\rho))$ does not generally hold for Gaussian operations $\Lambda_G$, as the Frobenius norm is not contractive under quantum channels~\cite{perez2006contractivity}. For instance, Fig.~\ref{fig:curves}(d) shows that for Fock states of large $n$, particle loss can increase $d_F$, demonstrating its non-monotonic behavior.

\subsection{Numerical results}\label{sec:numerical}
In practical experiments, the accuracy of NGE measurements is affected by noise in the quantum circuit. As shown in Fig.~\ref{fig:BS_protocol}(b), each bosonic mode in the rotating frame experiences a noise channel $\CNN$, defined as
\begin{equation}\label{eq:N_chn}
\CNN[\sigma_D^2,\sigma_P^2,\gamma_L] := \CNN_D[\sigma_D^2] \circ \CNN_P[\sigma_P^2] \circ \CNN_L[\gamma_L],
\end{equation}
where $\CNN_L[\gamma_L]$ is a bosonic loss channel with loss rate $\gamma_L$, $\CNN_P[\sigma_P^2]$ is a dephasing channel with dephasing variance $\sigma_P^2$, and $\CNN_D[\sigma_D^2]$ is a random displacement channel with displacement variance $\sigma_D^2$. Due to the commutativity of these channels, any ordering of their composition can be rewritten in the form of Eq.~\eqref{eq:N_chn} with appropriate parameters. Moreover, $\CNN$ can be generated by a master equation with suitable jump operators (see Appendix~\ref{appen:exam}).
To model a noisy beam splitter, we replace $U_{\pi/4}$ with a random unitary channel $\CBB_{\pi/4} \sim \{\sqrt{p_\varphi} U_{\pi/4+\varphi} \}_\varphi$, where $\varphi$ is sampled from a zero-mean Gaussian distribution with variance $\sigma_B^2$. For simplicity, we simulate weak noise with $\sigma_D = \sigma_P = \gamma_L = \sigma_B = \widetilde{\varepsilon} \ll 1$.
The final parity measurement is performed using an ancilla qubit initialized in the $\ket{+} \propto \ket{0} + \ket{1}$ state~\cite{footnote1}. A controlled-parity gate can be implemented via the interaction $V \sim \sum_{j=1}^N \hat{n}_{B,j} (\proj{0} - \proj{1})$~\cite{Sun2014}. The average parity is then extracted from the expectation value of the Pauli-$X$ operator: $\langle P \rangle = \langle X \rangle$.
In the practical experiment, due to imperfect preparation, gate implementation, or readout, the phase-flip errors in the ancilla qubit can introduce a rescaling: $\langle P \rangle \to \langle P \rangle (1 - 2\varepsilon_p)$, where $\varepsilon_p$ is the error probability. We include this effect by setting $\varepsilon_p = \widetilde{\varepsilon}$.
As two examples, we consider the cases where the input state $\psi$ are the two-component cat state $\propto \ket{z} + \ket{-z}$ (with $\ket{z} := e^{z\hat{a}^\dagger - z^* \hat{a}} \ket{0}$) and the Fock state $\proj{n}$. The simulation results are presented in Fig.~\ref{fig:curves} (a,b).

\section{Conclusion and future work}
In this work, we propose a novel and efficient method for quantifying bosonic non-Gaussianity, based on quantum convolution as implemented via beam splitters. A key advantage of our protocol is its experimental practicality: it requires only a constant number of copies of the input state, avoiding the resource-intensive demands of conventional characterization techniques. This approach provides a powerful tool for analyzing and manipulating non-Gaussian quantum states, which are essential for advanced quantum technologies such as universal quantum computation and quantum metrology.
Beyond the scope of this work, several related questions remain open. One particularly intriguing direction concerns the quantum capacity of a beam splitter acting as a noisy channel. While this question has drawn considerable attention~\cite{HolevoPRA01,caruso2006one,WolfPRL07,WildePRA12,pirandola2017fundamental,SabapathyPRA17,WildeIEEE18,Rosati2018narrow,JiangIEEE19,Jiang2020enhanced,wang2024passive,LamiPRL20,OskoueiIEEE22}, it remains unresolved whether the channel’s quantum capacity can be bounded by the non-Gaussianity of its environmental states—analogous to known results in the qudit setting~\cite{BJ24a}.
In addition, within the convex resource theory of non-Gaussianity~\cite{Takagi18,ZhuangPRA18}, free states are defined as convex mixtures of Gaussian states, which may themselves have non-Gaussian Wigner functions. Extending our method to characterize such mixtures would be a valuable direction for future research.

\section{Acknowledgements}
We thank Arthur Jaffe,  Liang Jiang and Seth Lloyd for discussions. This work was supported
in part by the ARO Grant W911NF-19-1-0302 and the ARO
MURI Grant W911NF-20-1-0082

\bibliography{reference}{}

\clearpage
\newpage
\onecolumngrid
\begin{appendix}

\section{Proof of the main results for pure states}\label{appen:A}

Consider a system of $N$ bosonic modes with Hilbert space $\mathcal{H} = \bigotimes_{i=1}^N \mathcal{H}_i$, and define the vector of quadrature operators as $\hat{x} = (\hat{q}_1, \hat{p}_1, \dots, \hat{q}_N, \hat{p}_N)$.
These operators satisfy the canonical commutation relations:
$$
[\hat{x}_i, \hat{x}_j] = i \Omega_{ij},
$$
where $\Omega = \bigoplus_{i=1}^N \omega$, and
$\omega =
\begin{bmatrix}
0 & 1 \\
-1 & 0
\end{bmatrix}$.
For a quantum state $\rho$, the expectation value of $\hat{x}_j$ is given by
$\langle \hat{x}_j \rangle = \mathrm{Tr}(\rho \hat{x}_j)$,
and the mean (or first-moment) vector is defined as
\begin{equation}
    \bar{x} = \left( \langle \hat{x}_1 \rangle, \langle \hat{x}_2 \rangle, \dots, \langle \hat{x}_{2N} \rangle \right).
\end{equation}
The covariance matrix (or second-moment matrix) $V$ is defined by
\begin{equation}
    V_{ij} = \frac{1}{2}  \mathrm{Tr} \left[ \rho \left\{ \hat{x}_i - \langle \hat{x}_i \rangle, \hat{x}_j - \langle \hat{x}_j \rangle \right\} \right].
\end{equation}

In CV systems, an important class of operators—known as displacement operators—is defined by
\begin{align}
    D(\xi)=\exp(i\hat{x}^\T\Omega\xi), 
\end{align}
where $\xi\in\real^{2N}$. The characteristic function associated with a quantum state $\rho$ is given by
\begin{align}
    \Xi_{\rho}(\xi)=\Tr{\rho D(\xi)},
\end{align}
which provides a complete description of the state $\rho$.
The Wigner function is then defined as the Fourier transform of the characteristic function:
\begin{align}
    W_{\rho}(x):=\int_{\real^{2N}}\frac{\mathrm{d}^{2N}\xi}{(2\pi)^{2N}}
    \exp(-ix^\T\Omega\xi)\Xi_{\rho}(\xi),
\end{align}
which serves as a quasi-probability distribution on the phase space.

A Gaussian state is defined as a bosonic state whose characteristic function $\Xi_\rho$ and Wigner function $W_\rho$ take Gaussian forms:
\begin{align}
\Xi_{\rho}(\xi)=&\exp\left(-\frac{1}{2}\xi^\T(\Omega \Gamma \Omega^\T)\xi-i(\Omega\bar{x})^\T\xi\right),\\
W_{\rho}(x)=&\frac{\exp\left(-\frac{1}{2}(x-\bar{x})^\T V^{-1}(x-\bar{x})\right)}{(2\pi)^N\sqrt{\det V}}.
\end{align}
Hence, by the definition, 
Thus, a Gaussian state $\rho_G$ is fully specified by its first moment vector $\bar{x}$ and covariance matrix $V$.
Furthermore, a Gaussian unitary $U_G$ is characterized by its action on the first and second moments:
\begin{align}\label{eq:UG}
\bar{x} \mapsto S \bar{x} + d, \quad V \mapsto S V S^\T,
\end{align}
where $S$ is a real symplectic matrix satisfying $S \Omega S^\T = \Omega$.

Now, let us provide another characterization of the Gaussian states based on their behavior under the beam splitter.
\begin{thm}[Restatement of Theorem \ref{thm:mix}]\label{thm:key}
For any $N$-mode bosonic state $\rho$, 
$S(\rho\boxplus\rho)+S(\rho\boxminus\rho)=2S(\rho)$
iff $\rho$ is a Gaussian state. Equivalently, 
    $\rho$ is a Gaussian state iff $U_{\pi/4}(\rho\otimes \rho) U^\dag_{\pi/4}$ is a product state.
\end{thm}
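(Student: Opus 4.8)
The plan is to first reduce the statement to the product-state formulation and then prove that via characteristic functions. Since $U_{\pi/4}$ is unitary and the von Neumann entropy is additive over tensor factors, $S(\rho_{AB})=2S(\rho)$ for $\rho_{AB}:=U_{\pi/4}(\rho\ot\rho)U^\dag_{\pi/4}$, while $\Ptr{B}{\rho_{AB}}=\rho\boxplus\rho$ and $\Ptr{A}{\rho_{AB}}=\rho\boxminus\rho$. Subadditivity of entropy then gives $2S(\rho)\le S(\rho\boxplus\rho)+S(\rho\boxminus\rho)$, with equality exactly when $\rho_{AB}$ coincides with the product of its marginals. So the entropy identity is equivalent to ``$\rho_{AB}$ is a product state,'' and it remains to show this holds iff $\rho$ is Gaussian. (If $S(\rho)=\infty$ one reads the statement at the level of the marginals; the product-state version is unaffected.)

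For the characterization I would work with the characteristic function $g:=\Xi_\rho$. Using the action of a $50{:}50$ beam splitter on displacement operators, one finds $\Xi_{\rho_{AB}}(\xi_A,\xi_B)=g\!\big(\tfrac{\xi_A+\xi_B}{\sqrt2}\big)\,g\!\big(\tfrac{\xi_A-\xi_B}{\sqrt2}\big)$, together with $\Xi_{\rho\boxplus\rho}(\xi)=g(\xi/\sqrt2)^2$ and $\Xi_{\rho\boxminus\rho}(\xi)=|g(\xi/\sqrt2)|^2$. Substituting $s=\xi_A/\sqrt2$, $t=\xi_B/\sqrt2$, the product condition $\Xi_{\rho_{AB}}(\xi_A,\xi_B)=\Xi_{\rho\boxplus\rho}(\xi_A)\,\Xi_{\rho\boxminus\rho}(\xi_B)$ becomes the functional equation $g(s+t)\,g(s-t)=g(s)^2\,|g(t)|^2$ for all $s,t\in\real^{2N}$ --- the continuous-variable analogue of the Kac--Bernstein characterization of the normal law. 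A state with Gaussian characteristic function $g(\xi)=\exp(\xi^\T M\xi+ic^\T\xi)$, $M$ symmetric, manifestly satisfies this identity; equivalently, the $50{:}50$ beam splitter maps the block covariance matrix $V\oplus V$ of $\rho\ot\rho$ to itself, so a Gaussian $\rho$ yields a product $\rho_{AB}$. This handles the ``if'' direction.

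For the ``only if'' direction I would solve the functional equation using only the continuity of $g$. First, $g$ vanishes nowhere: if $g(s)\neq0$ and $g(t)\neq0$ then the right-hand side is nonzero, forcing $g(s\pm t)\neq0$, so $\{g\neq0\}$ is an open subgroup of $\real^{2N}$; being also closed it equals $\real^{2N}$ by connectedness. Since $\real^{2N}$ is simply connected, $g$ admits a continuous logarithm $h$ with $h(0)=0$, and the functional equation becomes $h(s+t)+h(s-t)=2h(s)+2\,\mathrm{Re}\,h(t)$. Splitting $h=h_1+ih_2$ into real and imaginary parts, $h_1$ satisfies the quadratic (parallelogram) equation $h_1(s+t)+h_1(s-t)=2h_1(s)+2h_1(t)$, whose continuous solutions are quadratic forms, while $h_2$ satisfies $h_2(s+t)+h_2(s-t)=2h_2(s)$, which with continuity and $h_2(0)=0$ reduces to Cauchy's equation, forcing $h_2$ linear. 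Hence $\Xi_\rho(\xi)=\exp(\xi^\T M\xi+ic^\T\xi)$ with $M\preceq0$ forced by $|\Xi_\rho|\le1$, i.e.\ $\rho$ is Gaussian.

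I expect the main obstacle to be making the functional-equation step airtight with only the regularity genuinely available: a characteristic function is continuous but need not be differentiable, so I would invoke the classical theory of the quadratic and Cauchy/Jensen equations under mere continuity (or measurability) rather than differentiating. A secondary point requiring care is the exact bookkeeping of signs and $\sqrt2$ factors in the beam-splitter transformation of displacement operators, and the orientation of $\boxplus$ versus $\boxminus$, since the whole argument rests on the precise form of the functional equation; the remaining ingredients --- the subadditivity reduction, the non-vanishing of $g$, and the identification of the Gaussian form with a genuine Gaussian state --- are routine.
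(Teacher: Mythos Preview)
Your proposal is correct and takes a genuinely different route from the paper. The paper first reduces to zero mean by a displacement, then specializes the product-state identity to the diagonal to obtain $\Xi_\rho(\xi)=\Xi_\rho(\xi/2)^3\,\Xi_\rho(-\xi/2)$, iterates this to $\Xi_\rho(\xi)=\Xi_\rho(\xi/2^k)^{a_k}\,\Xi_\rho(-\xi/2^k)^{b_k}$, and then lets $k\to\infty$ using the second-order expansion of the one-dimensional characteristic function $x\mapsto\Xi_\rho(x\xi)$ (a Cushen--Hudson / quantum-CLT argument that uses finiteness of the second moments). You instead keep the full two-variable identity $g(s+t)g(s-t)=g(s)^2|g(t)|^2$, use it to show $g$ is nowhere vanishing (open subgroup $\Rightarrow$ clopen $\Rightarrow$ everything), pass to a continuous logarithm, and then recognize the real and imaginary parts as solutions of the quadratic and Jensen functional equations respectively, forcing $\log\Xi_\rho$ to be quadratic plus imaginary-linear. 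Your approach avoids the zero-mean reduction and the second-moment hypothesis, needing only the continuity of $\Xi_\rho$; the price is that you must invoke the classical theory of the parallelogram and Jensen equations (measurable/continuous solutions), whereas the paper's limit computation is entirely self-contained once second moments are granted. Note also that setting $s=t=\xi/2$ in your functional equation recovers exactly the paper's iterated relation, so your argument strictly refines theirs. The caveats you flag (the $2\pi i\mathbb{Z}$ ambiguity in the logarithmic equation is killed by continuity and the vanishing at the origin; the sign/$\sqrt2$ bookkeeping) are real but routine, and your treatment of the equality case of subadditivity is the same reduction the paper uses implicitly.
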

\begin{proof}
Without loss of generality, we assume that the state $\rho$ has vanishing first moments, i.e., $\bar{x} = 0$, since local displacements can always be applied to transform any state into a zero-mean form. The condition
$S(\rho\boxplus\rho)+S(\rho\boxminus\rho)=2S(\rho)$ implies that the output of the beam splitter unitary $U_{\pi/4}$ acting on $\rho \otimes \rho$ is a product state of its marginals:
\begin{align}\label{eq:implication}
     U_{\pi/4}(\rho\otimes \rho) U^\dag_{\pi/4}=
 \sigma_1\otimes \sigma_2,
\end{align}
where $\sigma_1 = \rho \boxplus \rho$ and $\sigma_2 = \rho \boxminus \rho$. As a result, the state $\rho$ can be recovered by tracing out the second subsystem after applying the inverse beam splitter operation:
\begin{align}
    \rho=\Ptr{B}{ U^\dag_{\pi/4} (\sigma_1\otimes \sigma_2) U_{\pi/4}}.
\end{align}
Consequently, the characteristic function of $\rho$ satisfies the following identity:
\begin{align*}
    \Xi_{\rho}(\xi)
    =\Xi_{\sigma_1}\left(\frac{1}{\sqrt{2}}\xi\right)\Xi_{\sigma_2}\left(-\frac{1}{\sqrt{2}}\xi\right)
    =\Xi_{\rho\boxplus\rho}\left(\frac{1}{\sqrt{2}}\xi\right)\Xi_{\rho\boxminus\rho}\left(-\frac{1}{\sqrt{2}}\xi\right)
    =\Xi_{\rho}\left(\frac{1}{2}\xi\right)^3\Xi_{\rho}\left(-\frac{1}{2}\xi\right),
\end{align*}
where the final equality follows from the identities:
\begin{align*}
    \Xi_{\rho\boxplus\rho}\left(\xi\right)
    =\Xi_{\rho}\left(\frac{1}{\sqrt{2}}\xi\right)\Xi_{\rho}\left(\frac{1}{\sqrt{2}}\xi\right),\quad
    \Xi_{\rho\boxminus\rho}\left(\xi\right)
    =\Xi_{\rho}\left(\frac{1}{\sqrt{2}}\xi\right)\Xi_{\rho}\left(-\frac{1}{\sqrt{2}}\xi\right).
\end{align*}
By iterating this procedure, we obtain the general relation:
\begin{align*}
     \Xi_{\rho}(\xi)
     =\Xi_{\rho}\left(\frac{1}{\sqrt{m_k}}\xi\right)^{a_k}
     \Xi_{\rho}\left(-\frac{1}{\sqrt{m_k}}\xi\right)^{b_k}, 
\end{align*}
where $m_k=4^k$, $a_k=\frac{4^k+2^k}{2}$, and $b_k=\frac{4^k-2^k}{2}$ for any positive integer $k$.

Following the argument in Refs.~\cite{Cushen71,WolfPRL06}, we define a classical function  $g:\real\to \complex$
by $g(x)=\Xi_{\rho}(x\xi)$.
This function is a classical characteristic function, that is, the Fourier transform of a classical probability distribution with second moment $\xi^\T \Gamma \xi / 2$. Such characteristic functions are continuous at the origin and satisfy $g(0) = 0$ and $|g(x)| \leq 1$. When the second moment is finite, $g(x)$ admits a second-order expansion:
\begin{align}
 g(x)=1-\frac{\xi^\T\Gamma\xi}{4}x^2+o(x^2).
\end{align}
It follows that
\begin{align}
 \lim_{k\to +\infty}
 g\left(\frac{x}{\sqrt{m_k}}\right)^{a_k}
 g\left(-\frac{x}{\sqrt{m_k}}\right)^{b_k}
 =    \lim_{k\to +\infty}
 \left(1-\frac{\xi^\T\Gamma\xi x^2}{4m_k}\right)^{m_k}
 =\exp\left[-\frac{1}{4}\xi^\T\Gamma\xi x^2\right].
\end{align}
Hence, 
$\Xi_{\rho}(\xi)$ is a Gaussian function, and thus $\rho$ is a Gaussian state. 
 
\end{proof}

\begin{thm}[Restatement of Theorem \ref{thm:pure}]
Given an $N$-mode pure state $\psi$, the following subadditive inequality:
\begin{align}
S(\psi \boxplus \psi) \leq 2S(\psi),
\end{align}
holds if and only if $\psi$ is a Gaussian state. 
\end{thm}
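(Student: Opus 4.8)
The plan is to deduce Theorem~\ref{thm:pure} from Theorem~\ref{thm:key} (the mixed-state characterization just proven), using purity of $\psi$ to collapse the two-sided condition $S(\rho\boxplus\rho)+S(\rho\boxminus\rho)=2S(\rho)$ into the one-sided inequality $S(\psi\boxplus\psi)\le 2S(\psi)$. First I would dispose of the easy direction: if $\psi$ is Gaussian, then $U_{\pi/4}(\psi\ot\psi)U_{\pi/4}^\dag$ is again a pure Gaussian state, and a $50{:}50$ beam splitter acting on two identical Gaussian states produces an \emph{uncorrelated} pair of outputs (the symplectic transformation block-diagonalizes on the $\bar x=0$ covariance matrix), so $\psi\boxplus\psi$ is itself pure and Gaussian; hence $S(\psi\boxplus\psi)=0\le 2S(\psi)=0$, and the inequality holds (with equality).

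For the converse, suppose $S(\psi\boxplus\psi)\le 2S(\psi)$. Since $\psi$ is pure, $S(\psi)=0$, so the hypothesis forces $S(\psi\boxplus\psi)\le 0$, hence $S(\psi\boxplus\psi)=0$, i.e.\ $\psi\boxplus\psi$ is pure. Now $U_{\pi/4}(\psi\ot\psi)U_{\pi/4}^\dag$ is a pure state $\ket{\Psi}_{AB}$ whose $A$-marginal $\rho_A=\psi\boxplus\psi$ is pure; a pure global state with a pure marginal is necessarily a product state, so $\ket{\Psi}_{AB}=\ket{\phi}_A\ot\ket{\chi}_B$, which means $S(\psi\boxplus\psi)+S(\psi\boxminus\psi)=0=2S(\psi)$. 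This is precisely the hypothesis of Theorem~\ref{thm:key} with $\rho=\psi$, and that theorem concludes $\psi$ is Gaussian. The last clause of Theorem~\ref{thm:pure} --- that the inequality is violated exactly for non-Gaussian $\psi$ --- is then immediate by contraposition.

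The main obstacle, such as it is, lies not in this reduction but in making sure the invocation of Theorem~\ref{thm:key} is legitimate: Theorem~\ref{thm:key}'s proof assumed finite second moments (the expansion $g(x)=1-\tfrac{\xi^\T\Gamma\xi}{4}x^2+o(x^2)$ requires $\Gamma<\infty$), so strictly one should either restrict attention to states with finite covariance matrix or note that if the second moments diverge then $S(\psi\boxplus\psi)$ is already infinite and the inequality cannot be saturated by a non-Gaussian pure state anyway. A secondary point worth spelling out is the ``pure marginal $\Rightarrow$ product'' step: writing $\ket{\Psi}_{AB}$ in its Schmidt decomposition, purity of $\rho_A$ forces a single Schmidt coefficient, which gives the product form; this is standard but should be stated since it is the linchpin connecting one-sidedness to the two-sided symmetric condition. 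No new estimates are needed --- everything reduces to $S(\psi)=0$ plus Theorem~\ref{thm:key}.
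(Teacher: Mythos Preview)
Your proposal is correct and follows essentially the same route as the paper: reduce to Theorem~\ref{thm:key} by using purity of $\psi$ to show that the one-sided inequality $S(\psi\boxplus\psi)\le 2S(\psi)$ forces $\psi\boxplus\psi$ pure, hence $U_{\pi/4}(\psi\ot\psi)U_{\pi/4}^\dag$ a product, hence the two-sided equality $S(\psi\boxplus\psi)+S(\psi\boxminus\psi)=2S(\psi)$. Your additional remarks on the Schmidt-decomposition justification and the finite-second-moment hypothesis in Theorem~\ref{thm:key} are useful elaborations that the paper leaves implicit.
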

\begin{proof}
By Theorem~\ref{thm:key}, it suffices to show that $S(\psi \boxplus \psi) \leq 2S(\psi)$ iff $S(\psi\boxplus\psi)+S(\psi\boxminus\psi)=2S(\psi)$. 
First, assume that
$S(\psi\boxplus\psi)+S(\psi\boxminus\psi)=2S(\psi)=0$, then $S(\psi\boxplus\psi)=0$, and thus 
$S(\psi \boxplus \psi) \leq 2S(\psi)$.

Conversely, suppose  $S(\psi \boxplus \psi) \leq 2S(\psi)=0$, then $\psi \boxplus \psi$ is a pure state, 
and thus $  U\psi\ot\psi U^\dag$ is a product state, and thus $\psi \boxminus \psi$ is also a pure state. As a result, $S(\psi\boxplus\psi)+S(\psi\boxminus\psi)=0=2S(\psi)$.

\end{proof}

\begin{thm}[Restatement of Theorem~\ref{thm:Gauss_Test}]
    The average parity $\langle P\rangle=1$ iff $\psi$ is a Gaussian state. In addition, if the maximal overlap with the Gaussian state is $\max_{\phi_G}|\iinner{\psi}{\phi_G}|^2=1-\epsilon$, then $\langle P\rangle\geq (1-\epsilon)^2$.
\end{thm}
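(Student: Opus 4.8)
The plan splits into the equivalence and the stability estimate.

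\emph{Equivalence.} I would start from the identity $\langle P\rangle=\Tr{(\psi\boxplus\psi)^2}$ of Eq.~\eqref{eq:aver_par}, which exhibits $\langle P\rangle$ as the purity of $\psi\boxplus\psi$. A state has purity one iff it is pure, so $\langle P\rangle=1$ is equivalent to $S(\psi\boxplus\psi)=0$, and Theorem~\ref{thm:pure} (equivalently Theorem~\ref{thm:key}) says this holds precisely when $\psi$ is Gaussian; both directions are then immediate.

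\emph{Stability.} Write $\ket v:=U_{\pi/4}\ket{\psi}^{\otimes2}$ and take its Schmidt decomposition across the output bipartition, $\ket v=\sum_i\sqrt{\mu_i}\,\ket{a_i}_A\ket{b_i}_B$, so that $\psi\boxplus\psi=\sum_i\mu_i\proj{a_i}$ and $\langle P\rangle=\sum_i\mu_i^2$. Let $\phi_G$ be a pure Gaussian state with $|\iinner{\psi}{\phi_G}|^2=1-\epsilon$. Two facts link this to the $\mu_i$: since $\phi_G$ is Gaussian, $U_{\pi/4}\ket{\phi_G}^{\otimes2}$ is a \emph{pure} product $\ket{\phi_G\boxplus\phi_G}_A\otimes\ket{\phi_G\boxminus\phi_G}_B$ (Theorem~\ref{thm:pure} applied to $\phi_G$), and $|\iinner{v}{U_{\pi/4}\phi_G^{\otimes2}}|=|\iinner{\psi}{\phi_G}|^2=1-\epsilon$. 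Expanding this overlap in the Schmidt basis gives $1-\epsilon=\bigl|\sum_i\sqrt{\mu_i}\,x_i y_i\bigr|$ with $x_i=\iinner{a_i}{\phi_G\boxplus\phi_G}$, $y_i=\iinner{b_i}{\phi_G\boxminus\phi_G}$, and $\sum_i|x_i|^2\le1$, $\sum_i|y_i|^2\le1$. Two applications of Cauchy--Schwarz then close the argument: $(1-\epsilon)^2\le\bigl(\sum_i\mu_i|x_i|^2\bigr)\bigl(\sum_i|y_i|^2\bigr)\le\sum_i\mu_i|x_i|^2$, and $\sum_i\mu_i|x_i|^2\le\bigl(\sum_i\mu_i^2\bigr)^{1/2}\bigl(\sum_i|x_i|^4\bigr)^{1/2}\le\langle P\rangle^{1/2}$, whence $\langle P\rangle\ge(1-\epsilon)^{4}$.

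\emph{Main obstacle.} Tightening this $(1-\epsilon)^4$ to the asserted $(1-\epsilon)^2$ is the crux. The two Cauchy--Schwarz steps are lossy exactly when the leading Schmidt vector $\ket{a_1}$ is close to $\phi_G\boxplus\phi_G$, which is the relevant regime, so a sharper input is required. The route I would pursue is to invoke the Gaussian invariance of $\langle P\rangle$ (Proposition~\ref{prop:non_entrop}(2), via $\langle P\rangle=2^{-\NGE_{2,1}(\psi)}$) to reduce to $\phi_G=\ket{\mathrm{vac}}$, write $\ket\psi=\sqrt{1-\epsilon}\ket{\mathrm{vac}}+\sqrt\epsilon\ket e$ with $\iinner{\mathrm{vac}}{e}=0$, and use that $U_{\pi/4}$ and $U_{\pi/4}^\dagger$ both fix $\ket{\mathrm{vac}}^{\otimes2}$: this makes the first-order cross terms of $\ket v$ between $\ket{\mathrm{vac}}^{\otimes2}$ and its orthogonal complement drop out, so $\ket v=(1-\epsilon)\ket{\mathrm{vac}}^{\otimes2}+\ket R$ with $\iinner{\mathrm{vac}^{\otimes2}}{R}=0$, which pins the dominant block of $\psi\boxplus\psi$. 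Reinserting this structure into $\langle P\rangle=\sum_i\mu_i^2$, the remaining work is to bound how much weight $\psi\boxplus\psi$ transfers out of the vacuum into higher Fock sectors; controlling these leakage terms is the principal technical difficulty, and it is there that the sharp exponent must be settled.
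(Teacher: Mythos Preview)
Your equivalence argument is the paper's.

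For the stability estimate, your Schmidt-decomposition route is, once unpacked, the same computation the paper carries out. The paper phrases it as: (i) fidelity is monotone under partial trace, so $F(\psi\boxplus\psi,\phi_G\boxplus\phi_G)^2\ge(1-\epsilon)^2$, which for the pure state $\phi'_G:=\phi_G\boxplus\phi_G$ reads $\bra{\phi'_G}(\psi\boxplus\psi)\ket{\phi'_G}\ge(1-\epsilon)^2$; then (ii) $\Tr{\sigma^2}\ge\bra{\phi}\sigma\ket{\phi}^2$ for any unit vector $\ket\phi$. Your first Cauchy--Schwarz step is exactly a hands-on proof of (i), since $\sum_i\mu_i|x_i|^2=\bra{\phi'_G}(\psi\boxplus\psi)\ket{\phi'_G}$, and your second Cauchy--Schwarz step is exactly (ii). Both routes therefore terminate at $\langle P\rangle\ge(1-\epsilon)^4$.

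Your final paragraph then labors to sharpen this to $(1-\epsilon)^2$. In fact the paper's own proof does not reach $(1-\epsilon)^2$ either: its displayed chain concludes with ``$\bra{\phi'_G}\psi\boxplus\psi\ket{\phi'_G}^2\ge(1-\epsilon)^2$,'' but the line immediately preceding it only establishes $\bra{\phi'_G}\psi\boxplus\psi\ket{\phi'_G}\ge(1-\epsilon)^2$, whose square is $(1-\epsilon)^4$. The final exponent in the paper appears to be a slip; the argument there proves exactly the $(1-\epsilon)^4$ bound you obtained, and the vacuum-reduction program you sketch is pursuing a target the paper's proof does not itself establish.
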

\begin{proof}
The first part comes directly from the above theorem. 
If the maximal overlap with the Gaussian state is 
$\max_{\phi_G}|\iinner{\psi}{\phi_G}|^2=1-\epsilon$, then 
there exists 
a pure Gaussian state $\ket{\phi_G}$ such that $|\iinner{\psi}{\phi_G}|^2=1-\epsilon$, 
i.e., the fidelity $F(\psi, \phi_G)^2=1-\epsilon$. 
Hence
\begin{eqnarray}
     F(\psi\ot\psi, \phi_G\otimes \phi_G)^2
     =(1-\epsilon)^2.
\end{eqnarray}
By the monotonicity under partial trace, 
we have 
\begin{eqnarray}\label{eq:tech}
     F(\psi\boxplus\psi, \phi_G\boxplus\phi_G)^2
     \geq (1-\epsilon)^2.
\end{eqnarray}
As $\phi_G$ is a pure Gaussian state, the output state of the quantum convolution $\phi_G\boxplus\phi_G$ is also a pure Gaussian state, i.e, $\phi_G\boxplus\phi_G=\phi'_G$.
Hence, \eqref{eq:tech} can be rewritten as 
\begin{align}
    \bra{\phi'_G}\psi\boxplus\psi\ket{\phi'_G}\geq (1-\epsilon)^2.
\end{align}
Therefore, 
    \begin{align}
    \Tr{(\psi\boxplus \psi)^2}
    \geq     \Tr{\psi\boxplus \psi \proj{\phi'_G}\psi\boxplus\psi \proj{\phi'_G}}
    \geq \bra{\phi'_G}\psi\boxplus\psi\ket{\phi'_G}^2
    \geq (1-\epsilon)^2.
\end{align}
\end{proof}

\begin{lem}\label{lem:GU_Com}
    Given a Gaussian unitary $U_G$, there exist Gaussian unitaries $U_1$  and $U_2$ that satisfy the following commutation relation:
    \begin{align}
        U_{\pi/4} (U_G\otimes U_G)
        =(U_1\otimes U_2) U_{\pi/4},
    \end{align}
    where $U_{\pi/4}$ is the $50:50$ beam splitter.
\end{lem}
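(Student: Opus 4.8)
The plan is to work at the level of phase-space data (symplectic matrices and displacement vectors), since a Gaussian unitary is completely determined by its action on the first and second moments via Eq.~\eqref{eq:UG}, up to an irrelevant global phase. First I would fix notation: let $U_G$ be characterized by the symplectic matrix $S$ on $\real^{2N}$ and displacement $d\in\real^{2N}$, so that on moments it acts as $\bar x\mapsto S\bar x+d$, $V\mapsto SVS^\T$. The $50:50$ beam splitter $U_{\pi/4}$ acts on the $4N$-dimensional phase space of the two copies $A\oplus B$ by a symplectic matrix $S_{\mathrm{BS}}$ which, mode by mode, implements $\hat x_A\mapsto(\hat x_A+\hat x_B)/\sqrt2$, $\hat x_B\mapsto(\hat x_B-\hat x_A)/\sqrt2$; schematically $S_{\mathrm{BS}}=\tfrac{1}{\sqrt2}\begin{bmatrix}\mathbb{1}&\mathbb{1}\\-\mathbb{1}&\mathbb{1}\end{bmatrix}$ (with each block acting slot-wise and commuting with $\omega$), and it carries zero displacement.

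Next I would compute the symplectic matrix of the left-hand side $U_{\pi/4}(U_G\otimes U_G)$, namely $S_{\mathrm{BS}}\,(S\oplus S)$, and observe the crucial algebraic fact that $S_{\mathrm{BS}}$ commutes with $S\oplus S$: since $S_{\mathrm{BS}}$ is built from scalar multiples of the identity in the $A/B$ block structure, $S_{\mathrm{BS}}(S\oplus S)=(S\oplus S)S_{\mathrm{BS}}$. Hence the symplectic part of the left-hand side equals $(S\oplus S)S_{\mathrm{BS}}$, which is exactly of the desired form $(S_1\oplus S_2)S_{\mathrm{BS}}$ with $S_1=S_2=S$; in particular each $S_i$ is symplectic, so each corresponds to a genuine Gaussian unitary. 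For the displacement part I would track it through the composition: $U_{\pi/4}(U_G\otimes U_G)$ produces displacement $S_{\mathrm{BS}}(d\oplus d)$, which by the same commutation/linearity equals $(d_1\oplus d_2)$ applied after $S_{\mathrm{BS}}$, where $(d_1,d_2)$ is obtained by applying the beam-splitter map to the pair $(d,d)$ — concretely $d_1=\sqrt2\,d$, $d_2=0$ (the ``sum'' and ``difference'' of two equal vectors). Then I would define $U_1,U_2$ to be the Gaussian unitaries with data $(S,d_1)$ and $(S,d_2)$ respectively; by construction $(U_1\otimes U_2)U_{\pi/4}$ has the same symplectic matrix and displacement vector as $U_{\pi/4}(U_G\otimes U_G)$.

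Finally I would upgrade ``same phase-space action'' to ``equal as unitaries.'' Two Gaussian unitaries with identical symplectic matrix and displacement vector agree up to a global phase; since both sides of the claimed identity are products of the same number of Gaussian unitaries with matching data, the residual phases cancel, or can be absorbed into the definition of $U_1$ (a global phase is harmless for the statement). I expect the main obstacle to be purely bookkeeping: getting the beam-splitter symplectic matrix in the right slot convention (the interleaved ordering $\hat x=(\hat q_1,\hat p_1,\dots)$ versus the $A\oplus B$ block ordering) so that the commutation $S_{\mathrm{BS}}(S\oplus S)=(S\oplus S)S_{\mathrm{BS}}$ is manifest, and then correctly propagating the displacement through the composition; none of this is conceptually deep, but it is where sign and ordering errors would creep in. An alternative, even cleaner route avoiding matrices entirely is to use the Heisenberg-picture action on mode operators: $U_{\pi/4}$ sends $\hat a_{i,A}\mapsto(\hat a_{i,A}+\hat a_{i,B})/\sqrt2$ etc., and $U_G\otimes U_G$ acts the same way on the $A$ and $B$ copies (being the same unitary), so the two orders of composition produce the same Bogoliubov transformation of $(\hat a_{i,A},\hat a_{i,B})$ — from which one reads off $U_1,U_2$ directly; I would likely present this version and relegate the matrix computation to a remark.
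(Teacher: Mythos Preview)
Your proposal is correct and follows essentially the same approach as the paper: both arguments work at the phase-space level, verify that the left- and right-hand sides induce identical symplectic/displacement data, and arrive at the same choice $U_1\leftrightarrow(S,\sqrt{2}\,d)$, $U_2\leftrightarrow(S,0)$. Your treatment is slightly more conceptual (invoking the block commutation $S_{\mathrm{BS}}(S\oplus S)=(S\oplus S)S_{\mathrm{BS}}$ directly and explicitly handling the residual global phase), whereas the paper carries out the first- and second-moment computations explicitly, but the underlying strategy is the same.
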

\begin{proof}
Since the Gaussian unitary is fully characterized by its transformation of the first two moments, we only compare the actions of the left-hand side and right-hand side on these moments.

Let us assume that the given Gaussian unitary $U_G$
is described in \eqref{eq:UG}. The action of the left-hand side on the first moments of two input modes ($\bar{x}_A$ and $\bar{x}_B$) will be
\begin{equation}
    \left[
    \begin{array}{c}
         \bar{x}_A  \\
          \bar{x}_B
    \end{array}
    \right]
    \xrightarrow{U_G\otimes U_G}
    \left[
    \begin{array}{c}
        S(\bar{x}_A)+d  \\
        S(\bar{x}_B)+d
    \end{array}
    \right]
    \xrightarrow{U_{\pi/4}}
    \left[
    \begin{array}{c}
         S\left(\frac{\bar{x}_A +\bar{x}_B}{\sqrt{2}}\right)+\sqrt{2}d \\
          S\left(\frac{\bar{x}_B -\bar{x}_A}{\sqrt{2}}\right)
    \end{array}
    \right],    
\end{equation}
and 
\begin{equation}
\left[
    \begin{array}{cc}
      V_{AA}   & V_{AB}  \\
      V_{BA}   & V_{BB}
    \end{array}
\right]
\to 
\left[
    \begin{array}{cc}
      SV_{AA}S^\T   & SV_{AB}S^\T  \\
      SV_{BA} S^\T  & SV_{BB}S^\T
    \end{array}
\right].
\end{equation}

Now, let us choose the Gaussian unitaries 
$U_1$ with the symplectic matrix $S$ and $d_1=\sqrt{2}d$, and $U_2$ with the symplectic matrix $S$ and $d_2=0$. Then the action of the right-hand side on the first two moments will be
 \begin{equation}
\left[
\begin{array}{c}
    \bar{x}_A  \\
    \bar{x}_B
\end{array}
\right]
\to 
\left[
\begin{array}{c}
     S\left(\frac{\bar{x}_A +\bar{x}_B}{\sqrt{2}}\right)+d_1 \\
      S\left(\frac{\bar{x}_B -\bar{x}_A}{\sqrt{2}}\right)+d_2
\end{array}
\right]    
=\left[
\begin{array}{c}
     S\left(\frac{\bar{x}_A +\bar{x}_B}{\sqrt{2}}\right)+\sqrt{2}d \\
      S\left(\frac{\bar{x}_B -\bar{x}_A}{\sqrt{2}}\right)
\end{array}
\right],    
\end{equation}
and 
\begin{equation}
\left[
\begin{array}{cc}
  V_{AA}   & V_{AB}  \\
  V_{BA}   & V_{BB}
\end{array}
\right]
\to 
\left[
\begin{array}{cc}
  SV_{AA}S^\T   & SV_{AB}S^\T  \\
  SV_{BA} S^\T  & SV_{BB}S^\T
\end{array}
\right].
\end{equation}
Hence, for these chosen Gaussian unitaries $U_1$ and $U_2$, we obtain the result.

\end{proof}

\begin{prop}[Restatement of Proposition \ref{prop:non_entrop}]
The $(\alpha,k)$ R\'enyi non-Gaussian entropy $\NGE_{\alpha,k}(\psi)$  with $\alpha\in [0,+\infty]$ and $k\in \mathbb{Z}_+$  satisfies  the following properties:
 \begin{enumerate}[(1)]
    \item Faithfulness: $\NGE_{\alpha,k}(\psi)\geq 0$ with equality iff $\psi$ is a Gaussian state. 
    \item Gaussian invariance: $\NGE_{\alpha,k}(U_G\proj{\psi} U^\dag_G)=\NGE_{\alpha,k}(\psi)$ for any Gaussian unitary $U_G$. 
    \item Additivity under tensor product: $\NGE_{\alpha,k}(\psi_1\ot\psi_2)=\NGE_{\alpha,k}(\psi_1)+\NGE_{\alpha,k}(\psi_2)$. 
\end{enumerate}    
\end{prop}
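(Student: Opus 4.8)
The plan is to prove each of the three properties in turn, relying on Theorem~\ref{thm:key} (the restated Theorem~\ref{thm:mix}) for faithfulness, Lemma~\ref{lem:GU_Com} for Gaussian invariance, and a direct characteristic-function computation for additivity.

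\textbf{Faithfulness.} First I would observe that $\NGE_{\alpha,k}(\psi) = S_\alpha(\boxplus^k\psi) \ge 0$ automatically, since R\'enyi entropies of a density operator are nonnegative for all $\alpha \in [0,+\infty]$. For the equality case, I would argue by induction on $k$ that $\boxplus^k\psi$ is pure iff $\psi$ is Gaussian. The base case $k=0$ is trivial. For $k=1$: if $\psi$ is Gaussian then $\psi\boxplus\psi$ is a pure Gaussian state (a 50:50 beam splitter is a Gaussian unitary and $\psi\otimes\psi$ is pure Gaussian, so its marginal is pure only because — here I need the fact, already used in the proof of Theorem~\ref{thm:pure} — the output of $U_{\pi/4}$ on $\psi\otimes\psi$ factorizes when $\psi$ is Gaussian; equivalently one can compute the covariance matrix directly). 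Conversely, if $\psi\boxplus\psi$ is pure, then by Theorem~\ref{thm:key} $U_{\pi/4}(\psi\otimes\psi)U^\dagger_{\pi/4}$ is a product state, hence $\psi$ is Gaussian. The inductive step reduces $\boxplus^k\psi = (\boxplus^{k-1}\psi)\boxplus\psi$ to the $k=1$ analysis applied to the (generally mixed) state $\boxplus^{k-1}\psi$; I would need the mild generalization that if $\sigma$ is a Gaussian state and $\psi$ is a pure Gaussian state then $\sigma\boxplus\psi$ is again Gaussian, and that $\boxplus^{k-1}\psi$ is Gaussian iff $\psi$ is (for a pure state $\psi$, the convolution with itself preserves Gaussianity forward and, by Theorem~\ref{thm:key}-type reasoning on the characteristic function, backward). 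So $S_\alpha(\boxplus^k\psi)=0$ iff $\boxplus^k\psi$ is pure iff $\psi$ is Gaussian.

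\textbf{Gaussian invariance.} Here I would apply Lemma~\ref{lem:GU_Com} repeatedly. Write $\psi' = U_G\proj{\psi}U^\dagger_G$. Then $\boxplus^1\psi' = \Ptr{B}{U_{\pi/4}(U_G\proj\psi U_G^\dagger \otimes U_G\proj\psi U_G^\dagger)U_{\pi/4}^\dagger} = \Ptr{B}{(U_1\otimes U_2)U_{\pi/4}(\proj\psi\otimes\proj\psi)U_{\pi/4}^\dagger(U_1\otimes U_2)^\dagger} = U_1(\psi\boxplus\psi)U_1^\dagger$, using Lemma~\ref{lem:GU_Com} and then the fact that $U_2$ acts only on the traced-out subsystem. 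Since $\psi\boxplus\psi$ is then conjugated by a unitary, $S_\alpha$ is unchanged. For general $k$ I would iterate: $\boxplus^k\psi' $ equals $\boxplus^k\psi$ conjugated by some Gaussian unitary, proven by induction — at each step the "extra" fresh copy $\psi'$ also carries a $U_G$, and Lemma~\ref{lem:GU_Com} (in the mild variant where one input is a mixed Gaussian-conjugated state) lets one pull all the Gaussian unitaries out to the left of all the beam splitters; then unitary invariance of $S_\alpha$ finishes it.

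\textbf{Additivity under tensor product.} The key observation is that quantum convolution acts on tensor products modewise: $(\psi_1\otimes\psi_2)\boxplus(\psi_1\otimes\psi_2)$, with the beam splitter acting on corresponding modes, equals $(\psi_1\boxplus\psi_1)\otimes(\psi_2\boxplus\psi_2)$, because $U_{\pi/4}$ on the combined $2(N_1+N_2)$ modes factorizes as $U_{\pi/4}^{(1)}\otimes U_{\pi/4}^{(2)}$ across the two blocks of modes (the beam splitter couples mode $i$ of copy $A$ to mode $i$ of copy $B$, never mixing modes belonging to $\psi_1$ with those of $\psi_2$). Iterating, $\boxplus^k(\psi_1\otimes\psi_2) = (\boxplus^k\psi_1)\otimes(\boxplus^k\psi_2)$. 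Then additivity of the R\'enyi entropy over tensor products, $S_\alpha(\rho_1\otimes\rho_2)=S_\alpha(\rho_1)+S_\alpha(\rho_2)$, gives the claim immediately.

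\textbf{Main obstacle.} The routine parts are Gaussian invariance and additivity, which are essentially bookkeeping on top of Lemma~\ref{lem:GU_Com} and the modewise factorization of the beam splitter. The delicate point is the "only if" direction of faithfulness for $k \ge 2$: I must ensure that $\boxplus^{k-1}\psi$ being Gaussian forces $\psi$ to be Gaussian, i.e., that self-convolution is "injective on Gaussianity." This should follow from the characteristic-function identity $\Xi_{\boxplus^k\psi}(\xi) = \Xi_\psi(2^{-k/2}\xi)^{2^k}$ together with the limiting argument already deployed in the proof of Theorem~\ref{thm:key} (a central-limit-type collapse forcing $\Xi_\psi$ to be Gaussian), but making the induction airtight — keeping careful track of the mixed versus pure distinction as one peels off convolutions — is where I expect to spend the most care.
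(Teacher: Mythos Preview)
Your treatment of Gaussian invariance and additivity is essentially the paper's own argument: Lemma~\ref{lem:GU_Com} pushes the Gaussian unitary through the beam splitter, after which unitary invariance of $S_\alpha$ finishes part~(2), and the modewise factorization $\boxplus^k(\psi_1\otimes\psi_2)=(\boxplus^k\psi_1)\otimes(\boxplus^k\psi_2)$ together with additivity of $S_\alpha$ handles part~(3). One small caveat for your induction in part~(2): after the first step the accumulated register carries $U_1$ (symplectic $S$, displacement $\sqrt{2}d$), not the original $U_G$, so at the next convolution the two inputs are conjugated by \emph{different} Gaussian unitaries. Lemma~\ref{lem:GU_Com} as stated assumes the same $U_G$ on both inputs; you need the mild extension to $U_{S,d_A}\otimes U_{S,d_B}$ with common symplectic part but unequal displacements. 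The proof of Lemma~\ref{lem:GU_Com} goes through unchanged in that case, so this is only a matter of stating it.

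The genuine gap is in faithfulness for $k\ge 2$. The identity you invoke,
\[
\Xi_{\boxplus^k\psi}(\xi)\;=\;\Xi_\psi\!\bigl(2^{-k/2}\xi\bigr)^{2^k},
\]
is \emph{false} for the paper's inductive definition $\boxplus^k\psi=(\boxplus^{k-1}\psi)\boxplus\psi$. That formula would hold for the balanced doubling $\boxplus^k\psi=(\boxplus^{k-1}\psi)\boxplus(\boxplus^{k-1}\psi)$, which consumes $2^k$ copies with equal weights; here only $k{+}1$ copies enter, and the output mode is $2^{-k/2}(a_1+a_2)+2^{-(k-1)/2}a_3+\cdots+2^{-1/2}a_{k+1}$, giving instead
\[
\Xi_{\boxplus^k\psi}(\xi)\;=\;\Xi_\psi\!\bigl(2^{-k/2}\xi\bigr)^{2}\prod_{j=1}^{k-1}\Xi_\psi\!\bigl(2^{-j/2}\xi\bigr).
\]
Because the weights are unequal and do not all shrink, the Cushen--Hudson limiting argument from Theorem~\ref{thm:key} does not transplant directly. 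Your inductive reduction also does not land where you want it: Theorem~\ref{thm:key} concerns $\rho\boxplus\rho$, whereas $\boxplus^k\psi=(\boxplus^{k-1}\psi)\boxplus\psi$ is a convolution of two \emph{different} states. So neither of the two mechanisms you propose for the ``only if'' direction at $k\ge 2$ actually closes. (The paper's own proof simply asserts that $\boxplus^k\psi$ is pure iff $\psi$ is Gaussian, so you are right to flag this as the delicate point; but the specific fix you sketched rests on the wrong identity.) A workable route is to observe that purity of $\boxplus^k\psi$ forces $U_{\pi/4}\bigl((\boxplus^{k-1}\psi)\otimes\psi\bigr)U_{\pi/4}^\dagger$ to factor as a product with pure first marginal, translate this into a functional equation for $\Xi_\psi$ and $\Xi_{\boxplus^{k-1}\psi}$, and iterate downward---but this needs to be carried out rather than asserted.
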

\begin{proof}
(1) The faithfulness comes from the key observation that $\boxplus^{k} \psi$ is a pure state iff 
$\psi$ is a Gaussian state.

(2)
By Lemma \ref{lem:GU_Com}, for any Gaussian unitary $U_G$,  there exist
Gaussian unitaries  $U_1$ and $U_2$ such that 
\begin{align}
    U_{\pi/4}(U_G\otimes U_G)
    =(U_1\otimes U_2) U_{\pi/4},
\end{align}
where $U_{\pi/4}$ is the $50:50$ beam splitter.
This implies that 
\begin{align}
       U_{\pi/4}(U_G\otimes U_G)(\rho\otimes \rho)(U^\dag_G\otimes U^\dag_G)     U^\dag_{\pi/4}
       =(U_1\otimes U_2)U_{\pi/4}(\rho\otimes \rho)U^\dag_{\pi/4}(U^\dag_1\otimes U^\dag_2).
\end{align}
Hence, by the invariance of mutual information under product of unitaries, we get the result.

(3) Additivity under the tensor product comes from the fact that 
\begin{align}
    (\psi_1\ot\psi_2)\boxplus
     (\psi_1\ot\psi_2)
     =(\psi_1\boxplus\psi_1)\otimes (\psi_2\boxplus \psi_2).
\end{align}
    
\end{proof}

\section{The results for pure states with zero mean}\label{appen:zero_mean}
Now, let us consider a special case where the given pure bosonic state $\psi$ has zero mean, i.e., the first moment $\bar{x}=0$. In this case, the protocol will be simplified by the following result. 
\begin{prop}[Zero-mean case]\label{prop:zero-mean}
  Given a  pure bosonic state $\psi$ with mean vector  $\bar{x}=0$,
  $\psi$ is a Gaussian state iff $\psi\boxplus\psi=\psi$.
\end{prop}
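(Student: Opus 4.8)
The plan is to prove the two directions of the equivalence in Proposition~\ref{prop:zero-mean} separately, leaning on the characteristic-function machinery already deployed in the proof of Theorem~\ref{thm:key}. For the easy direction, if $\psi$ is a pure Gaussian state with $\bar{x}=0$, then $\psi\boxplus\psi$ is again a pure Gaussian state with covariance matrix $\tfrac{1}{2}(V+V)=V$ and zero mean, hence it equals $\psi$; this follows because a pure Gaussian state is uniquely determined by its first and second moments, and the beam-splitter addition rule gives $V\mapsto \tfrac{1}{2}(V_A+V_B)$ for the $\boxplus$ output. Equivalently, at the level of characteristic functions, $\Xi_{\psi\boxplus\psi}(\xi)=\Xi_{\psi}(\xi/\sqrt2)^2=\exp(-\tfrac12\xi^\T(\Omega\Gamma\Omega^\T)\xi)=\Xi_\psi(\xi)$, using $\bar x = 0$ so that the linear term vanishes.

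For the converse, suppose $\psi$ is pure with $\bar x = 0$ and $\psi\boxplus\psi=\psi$. First I would observe that $\psi\boxplus\psi=\psi$ is a pure state, so by the key fact established in the proof of Theorem~\ref{thm:pure} (namely, $\psi\boxplus\psi$ pure implies $U_{\pi/4}(\psi\otimes\psi)U^\dagger_{\pi/4}$ is a product state, hence $\psi\boxminus\psi$ is pure as well), we are in the situation where both marginals are pure. At this point one could simply invoke Theorem~\ref{thm:key} to conclude $\psi$ is Gaussian. Alternatively — and this is the more self-contained route — I would push the identity $\Xi_\psi(\xi)=\Xi_{\psi\boxplus\psi}(\xi)=\Xi_\psi(\xi/\sqrt2)^2$ directly: iterating gives $\Xi_\psi(\xi)=\Xi_\psi(\xi/2^{k/2})^{2^k}$ for every positive integer $k$, and then the same limiting argument as in the proof of Theorem~\ref{thm:key} — writing $g(x)=\Xi_\psi(x\xi)$, using the second-order expansion $g(x)=1-\tfrac{\xi^\T\Gamma\xi}{4}x^2+o(x^2)$ valid because $\psi$ has finite second moments, and taking $(1-\tfrac{\xi^\T\Gamma\xi x^2}{4\cdot 2^k})^{2^k}\to \exp(-\tfrac14\xi^\T\Gamma\xi x^2)$ — forces $\Xi_\psi$ to be Gaussian, hence $\psi$ is a Gaussian state.

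The main obstacle, as in Theorem~\ref{thm:key}, is justifying the passage to the limit: one needs $\Xi_\psi(\xi)$ to be continuous at the origin with the stated quadratic expansion, which requires the state to have a finite covariance matrix $\Gamma$. For a pure state this is not automatic in full generality, but I would handle it exactly as the earlier proof does — restricting attention to states with finite second moments (or noting that if $\psi$ violates this, the statement is interpreted accordingly) and citing the classical fact that a characteristic function of a distribution with finite second moment admits the second-order Taylor expansion at $0$. A secondary subtlety worth a remark is that the zero-mean hypothesis is genuinely used: it is what removes the linear phase term and lets $\Xi_\psi(\xi/\sqrt2)^2$ reproduce $\Xi_\psi(\xi)$ rather than $\Xi_\psi(\xi)$ up to a displacement, which is precisely why the three-copy (rather than four-copy) protocol works in this case. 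I would close by noting that the equivalence immediately yields the simplified test: estimate $\mathrm{Tr}[(\psi\boxplus\psi)\psi]$ via a beam-splitter-based swap test on three copies, and $\psi$ is Gaussian iff this overlap equals $1$.
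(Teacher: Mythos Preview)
Your proposal is correct and matches the paper's approach. The paper proves the forward direction by checking the characteristic function directly (your computation $\Xi_{\psi\boxplus\psi}(\xi)=\Xi_\psi(\xi/\sqrt2)^2=\Xi_\psi(\xi)$ for zero-mean Gaussian $\psi$), and for the converse it simply says ``by the proof in Theorem~\ref{thm:key}, $\psi$ is a Gaussian state''---which is exactly your Route~A/Route~B: either observe that $\psi\boxplus\psi=\psi$ is pure so the beam-splitter output is a product state and Theorem~\ref{thm:key} applies, or rerun the limiting argument on the cleaner iterated identity $\Xi_\psi(\xi)=\Xi_\psi(\xi/2^{k/2})^{2^k}$.
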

\begin{proof}
On one hand, if the pure state  $\psi$ is a zero-mean Gaussian state, then by checking the characteristic function, it is direct to see that $\psi\boxplus\psi=\psi$.

On the other hand, given a pure state $\psi$ with zero mean, if 
it satisfies that  $\psi\boxplus\psi=\psi$, 
then by the proof in Theorem \ref{thm:key}, 
$\psi$ is a Gaussian state.

\end{proof}
Based on the ideas from the bosonic Gaussianity test for general pure states, we propose the following protocol for zero-mean states, utilizing three copies of the input states.

\begin{center}
  \begin{tcolorbox}[width=8cm,height=3.5cm,title=Bosonic Gaussian Test For Zero-mean State]
(1) Prepare 2 copies of the pure states $\psi$ and perform the 50:50 beam splitter to get a copy of  $\psi\boxplus\psi$.

(2) Perform the 50:50 beam splitter on $\psi\boxplus\psi$ and another copy of  $\psi$, and measure the parity on the second half of the system.
\end{tcolorbox}
\end{center}
(See also the experiment setup protocol displayed in Fig.~\ref{fig:BS_protocol_zero}).
The average parity measured in the many-body bosonic interference experiment is 
\begin{align}
\langle \tilde{P}\rangle
=\bra{\psi}\psi\boxplus\psi\ket{\psi}.
\end{align}
Here we use $\tilde{P}$ for parity to differentiate it from the parity $P$ used in the previous protocol.

\begin{Rem}
For a given zero-mean pure state $\psi$, the average purity obtained using four copies of the input state is given by
$$\langle P\rangle=\Tr{(\psi\boxplus\psi)^2}.$$
Conversely, when using three copies, the average purity is 
$$\langle \tilde{P}\rangle=\bra{\psi}\psi\boxplus\psi\ket{\psi}.$$
While both quantities are capable of detecting Gaussianity, they are distinct.

Now, let us discuss the relation between these two quantities for a zero-mean pure state. On one hand, by the Cauchy-Schwarz inequality, we have 
$\bra{\psi}\psi\boxplus\psi\ket{\psi}
\leq \sqrt{\Tr{\psi\boxplus\psi^2}}$, that is, 
$$\langle \tilde{P}\rangle\leq \sqrt{\langle P\rangle}.$$
On the other hand, we find that there is no general inequality of the type
$\langle P\rangle\leq c\langle \tilde{P}\rangle^b$ with $c>0$ and $b>0$, i.e., $\Tr{\psi\boxplus\psi^2}\leq c \bra{\psi}\psi\boxplus\psi\ket{\psi}^b$. For example, let us consider the number state $\ket{1}$, its self-convolution is
$\proj{1}\boxplus\proj{1}=\frac{1}{2}(\proj{0}+\proj{2})$. For this state,
$\Tr{\proj{1}\boxplus\proj{1}^2}=1/2$, whereas $\bra{1}(\proj{1}\boxplus\proj{1})\ket{1}=0$, thereby disproving the existence of such a lower bound.
\end{Rem}

\begin{figure}[h!]
    \centering
    \includegraphics[width=0.25 \linewidth]{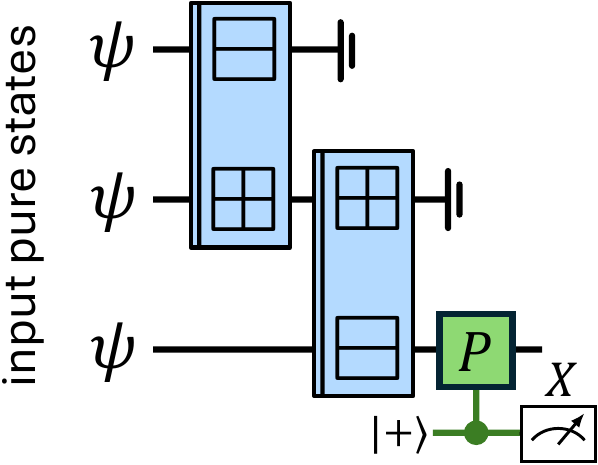}
    \caption{The protocol to test Gaussianity for pure input states with zero mean. }
    \label{fig:BS_protocol_zero}
\end{figure}

\section{Proof of the main results for mixed states}\label{appen:B}
\begin{prop}\label{thm:main_mix_S}
The Frobenius measure $d_F(\rho)$ satisfies the following properties: 
\begin{enumerate}[(1)]
    \item Faithfulness: $d_F(\rho)\geq 0$ with equality iff $\rho$ is a Gaussian state. 
    \item Gaussian invariance: $d_F(U_G\rho U^\dag_G) = d_F(\rho)$ 
    for any Gaussian unitary $U_G$.
    \item Subadditivity under tensor product: $d_F(\rho\otimes \sigma)\leq d_F(\rho)+ d_F(\sigma)$. 
\end{enumerate}
\end{prop}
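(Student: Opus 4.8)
The plan is to treat the three properties separately; properties (1) and (2) should fall out immediately from results already in the paper, and (3)---subadditivity under tensor product---is the only one carrying real content, with the single delicate point being a modewise factorization of the beam splitter. Throughout I write $\rho_{AB}=U_{\pi/4}(\rho\otimes\rho)U^\dag_{\pi/4}$, $\rho_A=\rho\boxplus\rho$, and $\rho_B=\rho\boxminus\rho$, so that $d_F(\rho)=\norm{\rho_{AB}-\rho_A\otimes\rho_B}_F$.

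For faithfulness, $d_F(\rho)\ge0$ holds because $\norm{\cdot}_F$ is a norm, and $d_F(\rho)=0$ iff $\rho_{AB}=\rho_A\otimes\rho_B$, i.e.\ iff $U_{\pi/4}(\rho\otimes\rho)U^\dag_{\pi/4}$ is a product state, which by Theorem~\ref{thm:key} is equivalent to $\rho$ being Gaussian. For Gaussian invariance I would use Lemma~\ref{lem:GU_Com}: given a Gaussian unitary $U_G$, there exist Gaussian unitaries $U_1,U_2$ with $U_{\pi/4}(U_G\otimes U_G)=(U_1\otimes U_2)U_{\pi/4}$. Conjugating $\rho\otimes\rho$ then shows $(U_G\rho U^\dag_G)_{AB}=(U_1\otimes U_2)\,\rho_{AB}\,(U_1\otimes U_2)^\dag$, and taking the two partial traces gives $(U_G\rho U^\dag_G)_A=U_1\rho_A U^\dag_1$ and $(U_G\rho U^\dag_G)_B=U_2\rho_B U^\dag_2$. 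Hence $\rho_{AB}-\rho_A\otimes\rho_B$ is conjugated by the single unitary $U_1\otimes U_2$, and $d_F$ is unchanged since the Frobenius norm is unitarily invariant.

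For subadditivity, I would first record that on the input $\rho\otimes\sigma$ the relevant $50{:}50$ beam splitter acts modewise---coupling the $\rho$-modes of the two copies among themselves and the $\sigma$-modes among themselves---so that, up to a fixed (hence unitary) permutation of tensor factors, $(\rho\otimes\sigma)_{AB}=\rho_{AB}\otimes\sigma_{AB}$, and tracing out $B$ (resp.\ $A$) gives $(\rho\otimes\sigma)_A=\rho_A\otimes\sigma_A$, $(\rho\otimes\sigma)_B=\rho_B\otimes\sigma_B$. Setting $X=\rho_{AB}$, $X'=\rho_A\otimes\rho_B$, $Y=\sigma_{AB}$, $Y'=\sigma_A\otimes\sigma_B$, this reduces the claim to $\norm{X\otimes Y-X'\otimes Y'}_F\le\norm{X-X'}_F+\norm{Y-Y'}_F$. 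Telescoping $X\otimes Y-X'\otimes Y'=(X-X')\otimes Y+X'\otimes(Y-Y')$, applying the triangle inequality, and using multiplicativity of the Frobenius norm under tensor products yields
\begin{align*}
\norm{X\otimes Y-X'\otimes Y'}_F\le\norm{X-X'}_F\,\norm{Y}_F+\norm{X'}_F\,\norm{Y-Y'}_F,
\end{align*}
and one finishes by observing $\norm{Y}_F=\sqrt{\Tr{\sigma_{AB}^2}}\le1$ because $\sigma_{AB}$ is a density operator and $\norm{X'}_F=\sqrt{\Tr{\rho_A^2}\,\Tr{\rho_B^2}}\le1$ because $\rho_A,\rho_B$ are density operators. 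The main obstacle is not any single estimate but the careful bookkeeping of mode labels and tensor-factor orderings needed to justify the modewise factorization in part (3); once that is in place, the rest is routine.
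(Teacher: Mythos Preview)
Your proposal is correct and follows essentially the same route as the paper: (1) and (2) are dispatched via Theorem~\ref{thm:key} and Lemma~\ref{lem:GU_Com} plus unitary invariance of $\norm{\cdot}_F$, and for (3) you use exactly the paper's telescoping $(X-X')\otimes Y+X'\otimes(Y-Y')$ together with the bounds $\norm{\sigma_{AB}}_F\le1$ and $\norm{\rho_A\otimes\rho_B}_F\le1$ that the paper invokes implicitly. If anything, you are more explicit about the modewise factorization and the purity bounds than the paper is.
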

\begin{proof}
    Faithfulness comes from Theorem \ref{thm:mix}. 
   Gaussian invariance stems from the commutation relation between a Gaussian unitary and a 50:50 beam splitter, as shown in Lemma~\ref{lem:GU_Com}, combined with the unitary invariance of the Frobenius norm.

    Subadditivity under the tensor product arises from the triangle inequality of the Frobenius norm. Specifically,
    if we define $\rho_{AB}=U_{\pi/4}(\rho\otimes \rho) U^\dag_{\pi/4}$, and 
    $\sigma_{AB}=U_{\pi/4}(\sigma\otimes \sigma) U^\dag_{\pi/4}$, then 
    \begin{align*}
        d_F(\rho\otimes \sigma)
        &=\norm{\rho_{AB}\otimes \sigma_{AB}-\rho_A\ot\rho_B\otimes \sigma_A\otimes \sigma_B}_F\\
        &\leq \norm{\rho_{AB}\otimes \sigma_{AB}-\rho_A\ot\rho_B\otimes \sigma_{AB}}_F
        +\norm{\rho_A\ot\rho_B\otimes \sigma_{AB}-\rho_A\ot\rho_B\otimes \sigma_A\otimes \sigma_B}_F\\
        &\leq  \norm{\rho_{AB}-\rho_A\ot\rho_B}_F
        +\norm{\sigma_{AB}-\sigma_A\otimes \sigma_B}_F\\
        &= d_F(\rho)+d_F(\sigma)\;.
    \end{align*}
\end{proof}

Let us consider a Gaussian channel
$\Lambda_G(\rho)=\Ptr{E}{U_G(\rho\otimes \sigma_E) U^\dag_G}$,
where the Gaussian state 
$\sigma_E$ is characterized by the 
first and second moments $(\bar{x}_B, V_{BB})$,
and the joint Gaussian unitary $U_G$ is characterized by 
\begin{equation}\label{eq:G_Ac}
   S=
   \left[
    \begin{array}{cc}
        S_{AA} & S_{AB} \\
        S_{BA} & S_{BB}
    \end{array}
    \right],\quad
d=
\left[
\begin{array}{cc}
      d_x \\
    d_y
    \end{array}
\right].
\end{equation}
Hence, the Gaussian channel $\Lambda_G$ is characterized by 
\begin{align}\label{eq:GC_Q}
    \bar{x}_A\to T\bar{x}_A+d,\quad
    V_{AA}\to TV_{AA}T^\T+N,
\end{align}
where $T=S_{AA}$, 
$d=S_{AB}\bar{x}_B+d_x$, and $N=S_{AB}V_{BB}S^\T_{AB}$.

\begin{lem}\label{lem:commu_GC}
Given a Gaussian channel $\Lambda_G$, there exist Gaussian channels $\Lambda_1$ and $\Lambda_2$  that satisfy the following commutation relation:
\begin{align}
    U_{\pi/4} \Lambda_G\ot\Lambda_G(\sigma_{AB}) U^\dag_{\pi/4}
    =\Lambda_1\ot\Lambda_2(U_{\pi/4}\sigma_{AB}U^\dag_{\pi/4}),
\end{align}
where  $\sigma_{AB}$ is any joint bosonic quantum state, and $U_{\pi/4}$ is the $50:50$ beam splitter.
\end{lem}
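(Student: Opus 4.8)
\textbf{Proof plan for Lemma~\ref{lem:commu_GC}.}
The strategy mirrors the proof of Lemma~\ref{lem:GU_Com}: since a Gaussian channel is completely determined by its action on first and second moments (via the triple $(T,d,N)$ in \eqref{eq:GC_Q}), and a Gaussian unitary is determined by $(S,d)$, it suffices to compute how both sides transform an arbitrary pair of first-moment vectors $(\bar{x}_A,\bar{x}_B)$ and an arbitrary joint covariance block $\left[\begin{smallmatrix} V_{AA} & V_{AB} \\ V_{BA} & V_{BB}\end{smallmatrix}\right]$, and then match them by choosing $\Lambda_1,\Lambda_2$ appropriately. First I would fix notation: let $\Lambda_G$ be given by $(T,d,N)$, meaning $\bar{x}\mapsto T\bar{x}+d$ and $V\mapsto TVT^\T+N$ on a single mode-block, and let $U_{\pi/4}$ act by $\bar{x}_A\mapsto(\bar{x}_A+\bar{x}_B)/\sqrt2$, $\bar{x}_B\mapsto(\bar{x}_B-\bar{x}_A)/\sqrt2$ (equivalently conjugation by the symplectic matrix $\tfrac{1}{\sqrt2}\left[\begin{smallmatrix}\mathbb{1}&\mathbb{1}\\-\mathbb{1}&\mathbb{1}\end{smallmatrix}\right]$ acting blockwise).

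The key computation has two halves. On the left-hand side, applying $\Lambda_G\otimes\Lambda_G$ sends $(\bar{x}_A,\bar{x}_B)\mapsto(T\bar{x}_A+d,\,T\bar{x}_B+d)$ and the covariance $\left[\begin{smallmatrix}V_{AA}&V_{AB}\\V_{BA}&V_{BB}\end{smallmatrix}\right]\mapsto\left[\begin{smallmatrix}TV_{AA}T^\T+N&TV_{AB}T^\T\\TV_{BA}T^\T&TV_{BB}T^\T+N\end{smallmatrix}\right]$; then the beam splitter maps the means to $\bigl(T(\bar{x}_A+\bar{x}_B)/\sqrt2+\sqrt2 d,\;T(\bar{x}_B-\bar{x}_A)/\sqrt2\bigr)$ and, because the beam splitter symplectic $O=\tfrac{1}{\sqrt2}\left[\begin{smallmatrix}\mathbb{1}&\mathbb{1}\\-\mathbb{1}&\mathbb{1}\end{smallmatrix}\right]$ commutes with $T\oplus T$ and sends $N\oplus N$ to $O(N\oplus N)O^\T=N\oplus N$ (since $O$ acts trivially when both noise blocks are equal), the covariance becomes $(T\oplus T)\,O\left[\begin{smallmatrix}V_{AA}&V_{AB}\\V_{BA}&V_{BB}\end{smallmatrix}\right]O^\T(T\oplus T)^\T + (N\oplus N)$. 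On the right-hand side, the beam splitter is applied first and then $\Lambda_1\otimes\Lambda_2$. Matching the two expressions shows the natural choice is $\Lambda_1$ with parameters $(T,\,\sqrt2 d,\,N)$ and $\Lambda_2$ with parameters $(T,\,0,\,N)$; I would then verify termwise that both the mean-vector transformations and the full covariance transformation agree, using the commutativity of $T\oplus T$ with $O$ and the $O$-invariance of $N\oplus N$.

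The one genuinely substantive point — and the step I expect to be the main obstacle — is justifying that a single Gaussian \emph{channel} (not just a Gaussian unitary) is determined by its action on moments in a way that makes this moment-level matching a valid proof of the operator identity. For Gaussian unitaries this is automatic; for channels one should either invoke the standard dilation $\Lambda_G(\rho)=\Ptr{E}{U_G(\rho\otimes\sigma_E)U^\dag_G}$ and run the argument at the level of the dilating unitary plus environment state (so that the environment-mode data, i.e.\ $(\bar{x}_B,V_{BB})$ and the off-diagonal symplectic blocks $S_{AB}$, appears explicitly), or invoke the fact that a Gaussian channel is uniquely fixed by the pair $(T,N)$ together with the displacement $d$. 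I would take the dilation route: lift $\Lambda_G\otimes\Lambda_G$ to $U_G\otimes U_G$ acting on $\sigma_{AB}\otimes\sigma_E\otimes\sigma_E$, push the beam splitter $U_{\pi/4}$ on the $AB$ system through using a $4$-mode-block symplectic rearrangement, apply Lemma~\ref{lem:GU_Com}-type commutation on each copy, and trace out the environments; care is needed to keep track of which environment mode belongs to which output port after the rearrangement, but no new ideas beyond bookkeeping are required. Once the moment identities are matched, the lemma follows because Gaussian channels with identical $(T,d,N)$ coincide.
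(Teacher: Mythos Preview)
Your proposal is correct and essentially identical to the paper's proof: both compute the action of each side on first and second moments, match them by choosing $\Lambda_1$ with parameters $(T,\sqrt{2}d,N)$ and $\Lambda_2$ with parameters $(T,0,N)$, and invoke that a Gaussian channel is fully characterized by this triple. Your worry about the ``substantive point'' is handled in the paper exactly by the assertion you first suggest---that a Gaussian channel is determined by $(T,d,N)$---so the dilation detour is unnecessary (though the paper does, as a closing remark, explicitly exhibit dilations realizing $\Lambda_1$ and $\Lambda_2$).
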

\begin{proof}
Since a Gaussian channel is fully characterized by how it transforms the first two moments, we only need to compare the effect of the left-hand side and right-hand side on these moments.

Let us consider the action of the given Gaussian channel $\Lambda_G(\rho)=\Ptr{E}{U_G\rho\otimes \sigma_E U^\dag_G}$.
Here, $\sigma_E$ is a Gaussian state characterized by its first and second moments $(\bar{x}_y, V_y)$,
and the action of the Gaussian unitary is described in Eq.~\eqref{eq:G_Ac}.
Hence, the overall action of this Gaussian channel on the first two moments is specified by $(T,d, N)$ in Eq.~\eqref{eq:GC_Q}. For the left-hand side of the relation we are considering, its action on the first two moments can be described as follows:
\begin{equation}
\left[
\begin{array}{c}
      \bar{x}_A \\
     \bar{x}_B
\end{array}
\right]
\to 
\left[
\begin{array}{c}
     T( \frac{\bar{x}_A+\bar{x}_B}{\sqrt{2}}) +\sqrt{2}d\\
   T( \frac{\bar{x}_B-\bar{x}_A}{\sqrt{2}} )
\end{array}
\right],
\end{equation}
and 
\begin{equation}
\left[
\begin{array}{cc}
   V_{AA}  & V_{AB}  \\
    V_{BA} & V_{BB} 
\end{array}
\right]
\to 
\frac{1}{2}
\left[
\begin{array}{cc}
  T (V_{AA} +V_{AB}+V_{BA}+V_{BB}) T^\T  & T(-V_{AA} +V_{AB}-V_{BA}+V_{BB}) T^\T  \\
   T (-V_{AA} -V_{AB}+V_{BA}+V_{BB})T^\T  & T(V_{AA} -V_{AB}-V_{BA}+V_{BB}) T^\T 
\end{array}
\right]
+
\left[
\begin{array}{cc}
N&0 \\
0  & N
\end{array}
\right].
\end{equation}

Now, let us assume that the action of  Gaussian channels $\Lambda_1$ and $\Lambda_2$ on the first two moments is described as $(T,d_1, N)$  and $(T,d_2, N)$ in Eq.~\eqref{eq:GC_Q}, where $d_1=\sqrt{2}d$ and $d_2=0$. Hence, for the right-hand side, its action on the first two moments can be described as follows:
\begin{equation}
\left[
\begin{array}{c}
      \bar{x}_A \\
     \bar{x}_B
\end{array}
\right]
\to 
\left[
\begin{array}{c}
     T( \frac{\bar{x}_A+\bar{x}_B}{\sqrt{2}}) +d_1\\
   T( \frac{\bar{x}_B-\bar{x}_A}{\sqrt{2}} )+d_2
\end{array}
\right]
=
\left[
\begin{array}{c}
     T( \frac{\bar{x}_A+\bar{x}_B}{\sqrt{2}}) +\sqrt{2}d\\
   T( \frac{\bar{x}_B-\bar{x}_A}{\sqrt{2}} )
\end{array}
\right],
\end{equation}
and 
\begin{equation}
\left[
\begin{array}{cc}
   V_{AA}  & V_{AB}  \\
    V_{BA} & V_{BB} 
\end{array}
\right]
\to 
\frac{1}{2}
\left[
\begin{array}{cc}
  T (V_{AA} +V_{AB}+V_{BA}+V_{BB}) T^\T  & T(-V_{AA} +V_{AB}-V_{BA}+V_{BB}) T^\T  \\
   T (-V_{AA} -V_{AB}+V_{BA}+V_{BB})T^\T  & T(V_{AA} -V_{AB}-V_{BA}+V_{BB}) T^\T 
\end{array}
\right]
+
\left[
\begin{array}{cc}
N&0 \\
0  & N
\end{array}
\right].
\end{equation}
Hence, the left-hand side is equal to the right-hand side
by choosing Gaussian channels  $\Lambda_1$ and $\Lambda_2$ with the parameters $  (T,\sqrt{2}d, N)$  and $(T,0, N)$.

Finally, we can construct the Gaussian channels $\Lambda_1$ and $\Lambda_2$. Let us consider the Gaussian state $\sigma_1$ with the first two moments $(\sqrt{2}\bar{y}, V_y)$ and the Gaussian unitary $U_1$ with the symplectic matrix $S$ and $d'=\sqrt{2}d$. Hence,  the Gaussian channel $\Lambda_1(\rho)=\Ptr{E}{U_1(\rho\otimes \sigma_1) U^\dag_1}$ has the parameters
$  (T,\sqrt{2}d, N)$. 
Similarly, we consider 
$\sigma_2$ with the first two moments $(0, V_y)$ and the 
Gaussian unitary $U_2$ with the symplectic matrix $S$ and $d'=0$.
Hence,  the Gaussian channel $\Lambda_2(\rho)=\Ptr{E}{U_2(\rho\otimes \sigma_2) U^\dag_2}$  has the parameters
$(T,0,N)$. 
\end{proof}

\begin{Def}
Given an $N$-mode quantum state $\rho$, the $\alpha$-mutual information of non-Gaussianity 
$\MING_{\alpha}$ is 
\begin{equation}\label{eq:MING_def}
    \MING_{\alpha}(\rho)
    =D_{\alpha}(\rho_{AB}||\rho_A\ot\rho_B)\;,
\end{equation}
where $\rho_{AB}=U_{\pi/4}\rho\otimes \rho U^\dag_{\pi/4}$, and  $ D_{\alpha}(\rho||\sigma):=\frac{1}{\alpha-1}\log_2\Tr{\left(\sigma^{\frac{1-\alpha}{2\alpha}}\rho\sigma^{\frac{1-\alpha}{2\alpha}}\right)^{\alpha}}$ is the sandwich quantum R\'enyi relative entropy. When $\alpha=1$, we refer to it as the mutual information of non-Gaussianity (or Ming for short), which can also be written as 
\begin{align}\label{eq:MING}
  \MING(\rho)=D(\rho_{AB}||\rho_A\ot\rho_B)
  =S(\rho\boxplus\rho)+S(\rho\boxminus\rho)-2S(\rho).
\end{align}
\end{Def}

\begin{prop}
The $\alpha$-Mutual-Information of Non-Gaussianity $\MING_{\alpha}$ for $\alpha\geq 1/2$ satisfies the following properties: 
\begin{enumerate}[(1)]
    \item Faithfulness: $\MING_{\alpha}(\rho)\geq 0$ with equality iff $\rho$ is a Gaussian state. 
    \item Monotonicity under Gaussian channel: $\MING_{\alpha}(\Lambda_G(\rho))\leq \MING_{\alpha}(\rho)$ for any Gaussian channel $\Lambda_G$.
    \item Additivity under tensor product: $\MING_{\alpha}(\rho\otimes \sigma)=\MING_{\alpha}(\rho)+\MING_{\alpha}(\sigma)$. 
\end{enumerate}
\end{prop}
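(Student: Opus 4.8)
The plan is to establish the three properties of $\MING_{\alpha}$ by reducing each to a known structural fact about the sandwich R\'enyi relative entropy $D_{\alpha}$, combined with the two lemmas already proved. First I would dispatch \emph{faithfulness}: since $D_{\alpha}(\omega\|\tau)\geq 0$ for any states with equality iff $\omega=\tau$ (valid for $\alpha\geq 1/2$), we have $\MING_{\alpha}(\rho)=0$ iff $\rho_{AB}=\rho_A\otimes\rho_B$, i.e.\ iff $U_{\pi/4}(\rho\otimes\rho)U^\dag_{\pi/4}$ is a product state, which by Theorem~\ref{thm:key} (= Theorem~\ref{thm:mix}) holds iff $\rho$ is Gaussian.

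For \emph{additivity under tensor product}, I would use the identity already exploited in the pure-state case: $(\rho\otimes\sigma)\boxplus(\rho\otimes\sigma)=(\rho\boxplus\rho)\otimes(\sigma\boxplus\sigma)$, and analogously for $\boxminus$ and for the joint state $\rho_{AB}$. Concretely, up to a reordering of tensor factors (a fixed unitary), $(\rho\otimes\sigma)_{AB}=\rho_{AB}\otimes\sigma_{AB}$ and $(\rho\otimes\sigma)_A\otimes(\rho\otimes\sigma)_B=(\rho_A\otimes\rho_B)\otimes(\sigma_A\otimes\sigma_B)$. Then additivity of $D_{\alpha}$ on tensor products, $D_{\alpha}(\omega_1\otimes\omega_2\|\tau_1\otimes\tau_2)=D_{\alpha}(\omega_1\|\tau_1)+D_{\alpha}(\omega_2\|\tau_2)$, together with the invariance of $D_\alpha$ under the common reordering unitary, yields the claim.

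For \emph{monotonicity under Gaussian channels}, I would invoke Lemma~\ref{lem:commu_GC}: for a Gaussian channel $\Lambda_G$ there are Gaussian channels $\Lambda_1,\Lambda_2$ with $U_{\pi/4}(\Lambda_G\otimes\Lambda_G)(\rho\otimes\rho)U^\dag_{\pi/4}=(\Lambda_1\otimes\Lambda_2)\bigl(U_{\pi/4}(\rho\otimes\rho)U^\dag_{\pi/4}\bigr)$, i.e.\ $(\Lambda_G(\rho))_{AB}=(\Lambda_1\otimes\Lambda_2)(\rho_{AB})$. Taking partial traces of both sides gives $(\Lambda_G(\rho))_A=\Lambda_1(\rho_A)$ and $(\Lambda_G(\rho))_B=\Lambda_2(\rho_B)$, hence $(\Lambda_G(\rho))_A\otimes(\Lambda_G(\rho))_B=(\Lambda_1\otimes\Lambda_2)(\rho_A\otimes\rho_B)$. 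Then the data-processing inequality for the sandwich R\'enyi relative entropy (which holds precisely for $\alpha\geq 1/2$, the reason for the stated range) applied to the channel $\Lambda_1\otimes\Lambda_2$ gives
\begin{align*}
\MING_{\alpha}(\Lambda_G(\rho))
&=D_{\alpha}\bigl((\Lambda_1\otimes\Lambda_2)(\rho_{AB})\,\big\|\,(\Lambda_1\otimes\Lambda_2)(\rho_A\otimes\rho_B)\bigr)\\
&\leq D_{\alpha}(\rho_{AB}\|\rho_A\otimes\rho_B)=\MING_{\alpha}(\rho).
\end{align*}
The main obstacle is bookkeeping rather than conceptual: one must be careful that the marginals of the channel-processed joint state really are $\Lambda_1(\rho_A)$ and $\Lambda_2(\rho_B)$ — this uses that $\Lambda_1\otimes\Lambda_2$ acts on the $A$ and $B$ factors separately, so tracing out $B$ commutes with applying $\Lambda_1$ on $A$ — and that in the additivity step the tensor-factor permutation identifying $(\rho\otimes\sigma)_{AB}$ with $\rho_{AB}\otimes\sigma_{AB}$ is the \emph{same} permutation that identifies the product of marginals correctly, so that it cancels inside $D_\alpha$ by unitary invariance. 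The special value $\alpha=1$ and the closed form in Eq.~\eqref{eq:MING} then follow by the standard limit $D_\alpha\to D$ as $\alpha\to 1$ and the definition of $D$ in terms of von Neumann entropies.
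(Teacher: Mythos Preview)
Your proposal is correct and follows essentially the same approach as the paper: faithfulness via Theorem~\ref{thm:key}, additivity via the tensor-product structure of $D_\alpha$, and monotonicity via Lemma~\ref{lem:commu_GC} together with the data-processing inequality for the sandwich R\'enyi relative entropy. You in fact fill in a step the paper leaves implicit --- verifying that the marginals also transform as $(\Lambda_G(\rho))_A=\Lambda_1(\rho_A)$ and $(\Lambda_G(\rho))_B=\Lambda_2(\rho_B)$, so that the \emph{same} channel $\Lambda_1\otimes\Lambda_2$ processes both arguments of $D_\alpha$ --- which is exactly what is needed to invoke DPI.
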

\begin{proof}
Properties (1) and (3) come from the definition. Now, let us focus on the property (2). Recall that, given two copies of the input state $\rho$, the output state after the $50:50$ beam splitter is 
\begin{align}
    \rho_{AB}=
    U_{\pi/4} (\rho\otimes \rho) U^\dag_{\pi/4},
\end{align}
For the two copies of the state $\Lambda_G(\rho)$, the output state after the $50:50$ beam splitter is 
\begin{align}
    \sigma_{AB}
    =U_{\pi/4} \Lambda_G(\rho)\ot\Lambda_G(\rho) U^\dag_{\pi/4}
    =\Lambda_1\ot\Lambda_2(U_{\pi/4}(\rho\otimes \rho) U^\dag_{\pi/4})
    =\Lambda_1\ot\Lambda_2(\rho_{AB}),
\end{align}
where the second equality comes from Lemma \ref{lem:commu_GC}. Hence, by the monotonicity of quantum relative entropy under a quantum channel, we have $\MING_{\alpha}(\Lambda_G(\rho))\leq \MING_{\alpha}(\rho)$.

\end{proof}

It is worth noting that for a pure state $\psi$, the quantity $\MING(\psi)$ defined in Eq.~\eqref{eq:MING} reduces to the non-Gaussian entropy $\NGE_{1,1}(\psi)$.
This follows from the fact that the von Neumann entropy of a pure state is zero, and the states $\rho \boxplus \rho$ and $\rho \boxminus \rho$ share the same spectrum. This observation confirms that $\MING(\rho)$ serves as a natural generalization of non-Gaussian entropy to mixed states, thereby extending the framework for quantifying non-Gaussianity.
Fig.~\ref{fig:curves}(c) and (d) illustrate the numerical value of $\MING(\rho)$ for mixed states of the form $\rho = \CNN_L[\gamma_L](\psi)$, where $\psi$ is chosen to be either a cat state or a Fock state. As expected, the loss channel $\CNN_L$ (which is a Gaussian channel) reduces non-Gaussianity, driving the state toward the vacuum as $\gamma_L \to 1$. Analytical expressions are provided in Appendix~\ref{appen:exam}. 

\begin{figure}[h!]
    \centering
    \includegraphics[width=0.5 \linewidth]{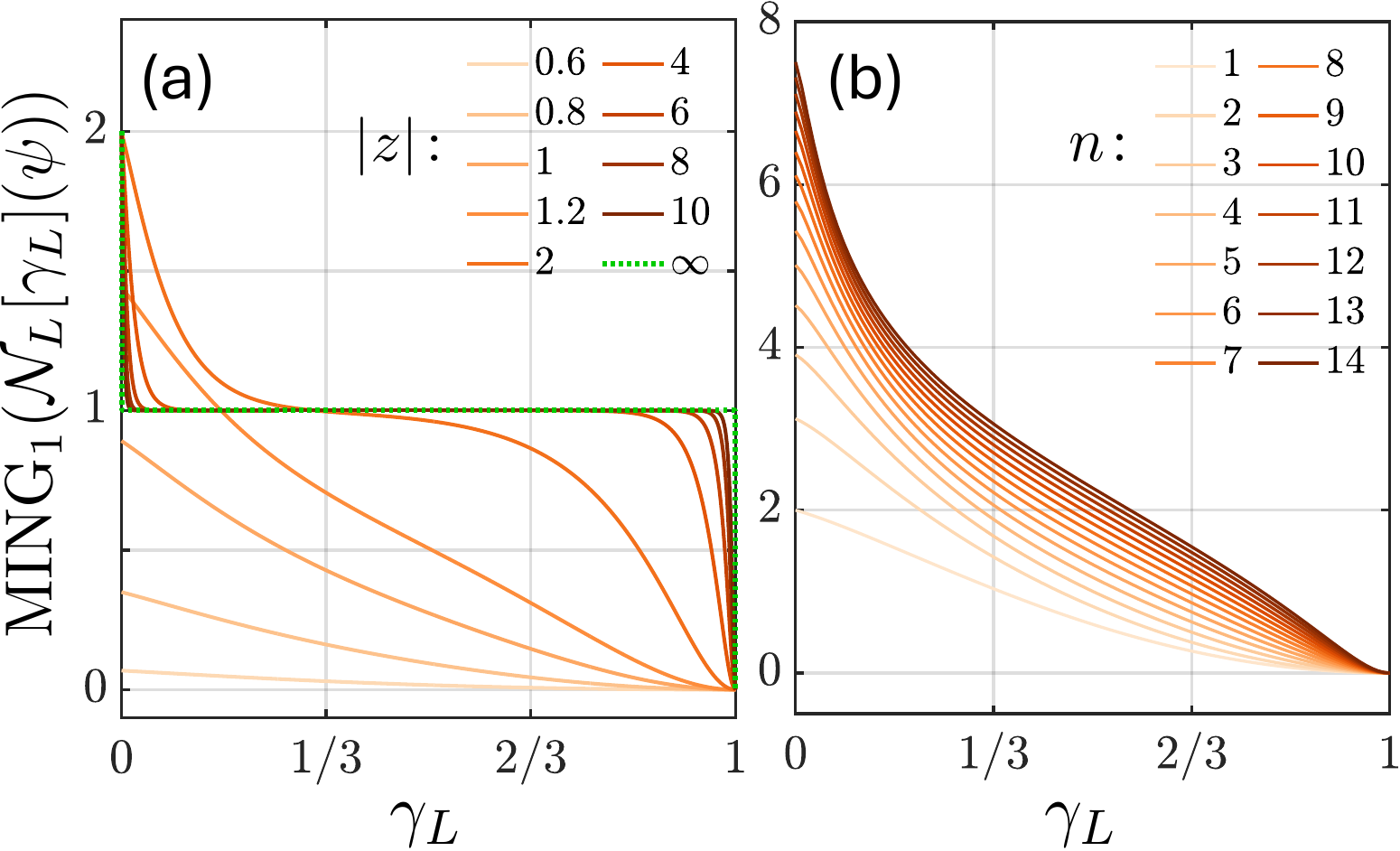}
    \label{fig:MING_curves}
    \caption{This figure demonstrates the values of $\MING_1(\rho)$ (see Eq.~\eqref{eq:MING_def}, where $\rho$ is a mixed state $\CNN_L[\gamma_L](\psi)$. (a) $\psi\propto \ket{z}+\ket{-z}$ is the two component cat state. (b) $\psi = \proj{n}$ is the Fock state of particle number $n$.}
\end{figure}

\section{Noisy bosonic channels}
In this work, we simulate the protocol in a noisy setting. For instance, the beam splitter is assumed to be noisy by introducing uncertainty on the rotational angle:
\begin{equation}\label{eq:Btheta}
    \CBB_\theta[\sigma_B^2](\rho) := \frac{1}{\sqrt{2\pi\sigma_B^2}}\int_{-\infty}^{\infty} e^{-\frac{\varphi^2}{2\sigma_B^2}} U_{\theta+\varphi}\rho U_{\theta + \varphi}^\dag \mathrm{d} \phi\;.
\end{equation}
For simplicity, we consider each bosonic mode suffers from a random displacement channel $\CNN_{D}[\sigma_D^2]$, a dephasing channel $\CNN_{P}[\sigma_P^2]$, and a bosonic loss channel $\CNN_{L}[\gamma_L]$, which are defined as:
\begin{align}
    \CNN_{D}[\sigma_D^2](\rho) &:= \frac{1}{2\pi \sigma_D^2}\int_{\complex} e^{-\frac{|\xi|^2}{2\sigma_D^2}} D(\xi) \rho D(\xi)^\dag \mathrm{d}^2 \xi \;,\label{eq:ND}\\
    \CNN_{P}[\sigma_P^2](\rho) &:= \frac{1}{\sqrt{2\pi\sigma_P^2}}\int_{-\infty}^{\infty} e^{-\frac{\varphi^2}{2\sigma_P^2}} e^{-i\varphi\hat{n}} \rho e^{i\varphi\hat{n}} \mathrm{d} \varphi \;,\label{eq:NP}\\
    \CNN_{L}[\gamma_L](\rho) &:=  \Ptr{E}{U_{\theta_L}(\rho\otimes \proj{0}_E)U_{\theta_L}^\dag}\;,\label{eq:NL}
\end{align}
where $\hat{n}:=\hat{a}^\dag \hat{a}$, $\sigma_D > 0$, $\sigma_P > 0$, $\gamma_L\in[0,1]$, and $\theta_L:=\arcsin\sqrt{\gamma_L}$. In the above definitions, $\CNN_D$ and $\CNN_P$ come from the uncertain physical unitary perturbations $D(\xi)$ and $e^{i\phi\hat{n}}$.  The definition of $\CNN_L[\gamma_L]$ and $\CNN_P[\sigma_P]$ can be considered as the integral consequence of the master equation $\frac{\mathrm{d}\rho}{\mathrm{d}t} = \kappa_L(\hat{a}\rho \hat{a}^\dag - \frac{1}{2}\{\hat{a}^\dag \hat{a}, \rho\})$, with $\gamma_L \equiv 1-e^{-\kappa_L t}$.
Note that the following Gaussian integral regarding a Hermitian operator $A$:
\begin{equation}
    \rho(s)=\frac{1}{\sqrt{2\pi s}} \int_{-\infty}^{\infty} e^{-\frac{\varphi^2}{2s}} e^{-i\varphi A}\rho(0) e^{i\varphi A^\dag} \mathrm{d} \varphi
\end{equation}
is the solution of the following differential equation at $t>0$:
\begin{equation}
    \frac{\mathrm{d}\rho}{\mathrm{d}s} = A \rho A^\dag - \frac{1}{2}\{A^\dag A, \rho\}\;.
\end{equation}
Thus, the action of $\CNN_P[\sigma_P]$ can be considered as the integral of a master equation $\frac{\mathrm{d}\rho}{\mathrm{d}t} = \kappa_P(\hat{n}\rho \hat{n}^\dag - \frac{1}{2}\{\hat{n}^2, \rho\})$, with $\sigma_P^2 \equiv \kappa_P t$. Similarly, the action of $\CNN_D[\sigma_D]$ can be considered as the integral of a master equation $\frac{\mathrm{d}\rho}{\mathrm{d}t} = \kappa_P(\hat{J}\rho \hat{J}^\dag - \frac{1}{2}\{\hat{J}^2, \rho\})$, with $\sigma_D^2 \equiv \kappa_D t$ and Hermitian jump operator $\hat{J} = \hat{x}^\T\Omega \xi = (\hat{q},\hat{p})\Omega (\xi_1, \xi_2)^\T$. 
It can be shown that
\begin{align}
    \CNN_{D}[\sigma_D^2] \circ \CNN_{P}[\sigma_P^2] &= \CNN_{P}[\sigma_P^2]\circ \CNN_{D}[\sigma_D^2] \;, \\
    \CNN_{L}[\gamma_L] \circ \CNN_{P}[\sigma_P^2] &= \CNN_{P}[\sigma_P^2]\circ \CNN_{L}[\gamma_L] \;,\\
    \CNN_{L}[\gamma_L] \circ \CNN_{D}[\sigma_D^2] &= \CNN_{D}[(1-\gamma_L)\sigma_D^2]\circ \CNN_{L}[\gamma_L]\;, 
\end{align}
and
\begin{align}
    \CNN_{D}[\sigma_1^2] \circ \CNN_{D}[\sigma_2^2] &= \CNN_{D}[\sigma_1^2 + \sigma_2^2]\;, \\
    \CNN_{P}[\sigma_1^2] \circ \CNN_{P}[\sigma_2^2] &= \CNN_{P}[\sigma_1^2 + \sigma_2^2]\;, \\
    \CNN_{L}[\gamma_1] \circ \CNN_{L}[\gamma_2] &= \CNN_{L}[\gamma_1 + \gamma_2 - \gamma_1\gamma_2]\;. 
\end{align}
In the simulation of the $\NGE_{2,1}$ measurement protocol, we define the noise channel for the idle bosonic mode as
\begin{equation}
    \CNN[\sigma_D^2,\sigma_P^2,\gamma_L]:=\CNN_D[\sigma_D^2]\circ \CNN_P[\sigma_P^2]\circ \CNN_L[\gamma_L]\;.
\end{equation}
The above ordering is just for convenience, as one can adjust the parameters if a different ordering is adopted. For example, utilizing the above commuting rules, we obtain
\begin{equation}
    \lim_{n\to \infty}\prod_{j=1}^{n} \CNN_D\left[\frac{\kappa_D\tau}{n}\right] \circ \CNN_P\left[\frac{\kappa_P \tau}{n}\right] \circ \CNN_L\left[\frac{\kappa_L\tau}{n}\right] = \CNN_D\left[\left(1-e^{-\kappa_L\tau}\right)\frac{\kappa_D}{\kappa_L}\right] \circ \CNN_P\left[\kappa_P \tau\right] \circ \CNN_L\left[1-e^{-\kappa_L\tau}\right]\;,
\end{equation}
for some $\tau > 0$. The above infinite product with infinitesimal parameter implies that the action of $\CNN[\sigma_D^2,\sigma_P^2,\gamma_L]$ can be integrated from the following Lindblad master equation:
\begin{equation}
    \frac{\mathrm{d}\rho}{\mathrm{d}\tau} = \kappa_L\left(\hat{a}\rho\hat{a}^\dagger - \frac{1}{2}\{\hat{n},\rho\}\right) - \frac{\kappa_D}{2}\left[\hat{n},\left[\hat{n},\rho\right]\right] - \frac{\kappa_P}{2}\left(\left[\hat{q},\left[\hat{q},\rho\right]\right]+\left[\hat{p},\left[\hat{p},\rho\right]\right]\right)
\end{equation}
with jump operators $\hat{n}$, $\hat{p}$, $\hat{q}$, $\hat{a}$, and parameters $\sigma_D^2 = \kappa_D\tau$, $\sigma_P^2 = \kappa_P\tau$, and $\gamma_L = 1 - e^{-\kappa_L\tau}$.

\section{Analytical calculation of non-Gaussianity measures}\label{appen:exam}
\subsection{Example: Fock state}
We denote $\ket{n}$ ($n \in \mathbb{N}$) as the Fock state of a single bosonic mode, which satisfies $\hat{n}\ket{n} = n\ket{n}$. It is known that, except for the vacuum state $\ket{0}$, all other states are non-Gaussian. This section calculates the non-Gaussianity measures $\NGE_{\alpha,1}(\proj{n})$, $d_F(\CNN_L[\gamma_L](\psi))$ and $\MING(\CNN_L[\gamma_L](\psi))$ for $\psi = \proj{n}$.

First, utilizing the celebrated Jordan–Schwinger map (sometimes known as the Schwinger boson representation), we can use the $SU(2)$ algebra to represent some operators of two bosonic modes:
\begin{align}
    S_+ \simeq \hat{a}_A^\dag \hat{a}_B, \quad
    S_- \simeq \hat{a}_A \hat{a}_B^\dag,\quad
    S_z \simeq \frac{1}{2}\left(\hat{a}_A^\dag \hat{a}_A - \hat{a}_B^\dag \hat{a}_B\right)\;.
\end{align}
If we denote $S_x := \frac{S_+ + S_-}{2}$ and $S_y := \frac{S_+ - S_-}{2i}$, then there will be $[S_\alpha, S_\beta] = i \epsilon_{\alpha\beta\gamma}S_\gamma$ for all $\alpha,\beta,\gamma\in\{x,y,z\}$ and $\epsilon_{\alpha\beta\gamma}$ is the Levi-Civita symbol.
Thus, the action of the beam splitter $U_\theta = e^{\theta(\hat{a}_A^\dag \hat{a}_B - \hat{a}_A \hat{a}_B^\dag)}$ can be represented as $e^{i2\theta S_y}$.
Note that the beam splitter unitary $U_\theta$ commutes with $\hat{n}_A + \hat{n}_B$, so if the initial state $\ket{\psi}$ is an eigenstate of $\hat{n}_1 + \hat{n}_2$ with eigenvalue $2S$, $U_\theta\ket{\psi}$ can be treated as a spin state of spin $S$, rotated along $y$ axis by $2\theta$.
Concretely, the correspondence between $\ket{S=m_z}_S$ under spin-$S$ and the 2-mode Fock state $\ket{m_A}\otimes\ket{m_B}$ is given by:
\begin{equation}
    m_A = S + m_z,\quad m_B = S - m_z\;,
\end{equation}
which implies $S = \frac{m_A + m_B}{2}$.
To find the matrix representation of the $50:50$ beam splitter, we need to find the unitary at $\theta = \pi/2$. We find that 
\begin{equation}
    \bra{S_z = m_z}e^{i\frac{\pi}{2}S_y}\ket{S_z = 0}_S = \left\{
    \begin{array}{ll}
         (-1)^{\frac{S-m_z}{2}}\frac{\sqrt{(S+m_z)!(S-m_z)!}}{2^S\left(\frac{S+m_z}{2}\right)!\left(\frac{S-m_z}{2}\right)!}, & \text{if }S+m_z\text{ is even},  \\
         0, & \text{otherwise}\;.
    \end{array}\right.
\end{equation}
When the input state of the beam splitter is $\ket{n}\otimes \ket{n}$, we have $S = n$ and the input state is mapped to $\ket{S_z = 0}$. Thus, $\proj{n}\boxplus\proj{n}$ is diagonal in the Fock basis with the diagonal entry $\bra{S_z = m_z}e^{i\frac{\pi}{2}S_y}\ket{S_z = 0}^2$:
\begin{equation}
    \bra{m}\left(\proj{n}\boxplus\proj{n}\right)\ket{m}
    =\left\{
    \begin{array}{ll}
         \frac{m!(2n-m)!}{\left(2^{n}\left(\frac{m}{2}\right)!\left(n-\frac{m}{2}\right)!\right)^2}, &  \text{if }m\text{ is even and } m\le 2n\;, \\
         0, &\text{otherwise}\;, 
    \end{array}
    \right.
\end{equation}
where we have replaced $S+m_z = m$ and $S-m_z = 2n - m$. The value of $m$ can be any natural number.
Thus, the non-Gaussian entropy is given by
\begin{equation}
    \NGE_{\alpha,1}(\proj{n}) = \frac{1}{1-\alpha}\log_2\sum_{k=0}^n \frac{\left((2k)!(2n-2k)!\right)^\alpha}{\left(2^{n} k!\left(n-k\right)!\right)^{2\alpha}}\;,
\end{equation}
and the mutual information non-Gaussianity is
\begin{equation}
   \MING(\proj{n}) = -\sum_{k=0}^n \frac{(2k)!(2n-2k)!}{\left(2^{n} k!\left(n-k\right)!\right)^2} \log_2 \frac{(2k)!(2n-2k)!}{\left(2^{n} k!\left(n-k\right)!\right)^2}\;.
\end{equation}

Similarly, let $\CNN_L[\gamma_l]$ be the pure loss channel with $\gamma_L\in[0,1]$ (see Eq.~\eqref{eq:NL}). For convenience, let us define
\begin{equation}
     U^{(S)}_{m_{z},m_{z}'}(\theta):=
        \bra{S_z=m_{z}}e^{i2\theta S_y}\ket{S_z = m_{z}'}_S\;,
\end{equation}
which can be explicitly given by the Wigner D-matrix.
Since the initial is the Fock state $\psi = \proj{n}\otimes \proj{0}$, we can replace $S = n/2$, $m_z = S - m$ and obtain the noisy input state
\begin{equation}\label{eq:NL_nn}
    \rho:=\CNN_L[\gamma_L](\proj{n}) = \sum_{m=0}^n \left|U^{(n/2)}_{\frac{n}{2}-m,\frac{n}{2}}(\theta_L)\right|^2  \proj{m}\;.
\end{equation}
The next step is finding $\proj{m_A}\boxplus \proj{m_B}$ for all $0\le m_A,m_B\le n$, we have
\begin{equation}
    \proj{m_A}\boxplus\proj{m_B} = \sum_{\ell = 0}^{m_A+m_B} \left|U^{\left((m_A+m_B/2\right)}_{\frac{m_A+m_B}{2} - \ell,\frac{m_A-m_B}{2}}\left(\frac{\pi}{4}\right)\right|^2 \proj{\ell}\;.
\end{equation}
Thus, we arrive at
\begin{equation}\label{eq:NL_nn_boxplus}
        \rho \boxplus \rho = \sum_{m_A,m_B = 0}^n  \sum_{\ell = 0}^{m_A+m_B} \left|U^{(n/2)}_{\frac{n}{2}-m_A,\frac{n}{2}}(\theta_L) U^{(n/2)}_{\frac{n}{2}-m_B,\frac{n}{2}}(\theta_L) U^{\left((m_A+m_B/2\right)}_{\frac{m_A+m_B}{2} - \ell,\frac{m_A-m_B}{2}}\left(\frac{\pi}{4 }\right)\right|^2 \proj{\ell}\;.
\end{equation}
Since $\rho\boxminus \rho = \rho\boxplus\rho$, the mutual information of non-Gaussianity is given by
\begin{equation}
    \MING(\rho) = 2S(\rho\boxplus\rho) - 2S(\rho)\;,
\end{equation}
where the involved states are diagonal in the Fock basis, as shown in Eq.~\eqref{eq:NL_nn} and \eqref{eq:NL_nn_boxplus}. The numerical result is illustrated in Fig.~\ref{fig:MING_curves}(d). 

To calculate $d_F(\rho)$, we can expand it as:
\begin{equation}
    \begin{aligned}
        d_F(\rho) &= \|U_{\pi/4}(\rho\otimes\rho)U_{\pi/4}^\dag - (\rho\boxplus\rho)(\rho\boxminus\rho)\|_F \\
        &= \left\{\Tr{\rho^2}^2 + \Tr{(\rho\boxplus\rho)^2}\Tr{(\rho\boxminus\rho)^2} - 2\Tr{U_{\pi/4}(\rho\otimes\rho)U_{\pi/4}^\dag (\rho\boxplus\rho)\otimes (\rho\boxminus\rho)}\right\}^{1/2} \\
        &= \left\{\Tr{{\rho^2}}^2 + \Tr{(\rho\boxplus\rho)^2}^2 - 2\Tr{U_{\pi/4}(\rho\otimes\rho)U_{\pi/4}^\dag (\rho\boxplus\rho)\otimes (\rho\boxplus\rho)}\right\}^{1/2} \;.
    \end{aligned}
\end{equation}
Note that $\rho\otimes\rho$ usually has a high rank, the explicit expression of $- 2\Tr{U_{\pi/4}(\rho\otimes\rho)U_{\pi/4}^\dag (\rho\boxplus\rho)\otimes (\rho\boxplus\rho)}$ will be cumbersome. However, since $U_{\pi/4}(\rho\otimes\rho)U_{\pi/4}^\dag$ can be exactly represented in the Fock state with finite cutoff, we can still calculate $d_F(\rho)$ under any $\gamma_L$ without numerical approximation.

\subsection{Example: two-component cat state}
We demonstrate the derivation for the non-Gaussianity measure for the noisy two-component cat state: $\CNN_L[\gamma_L](\psi)$, where $\ket{\psi}\propto \ket{z}+\ket{-z}$ ($z\in\mathbb{C}$), where $\ket{z} := e^{z\hat{a}^\dag - z^* \hat{a}}\ket{0}$ is the coherent state.
To start with, we define the \textit{un-normalized} two-component cat state as:
\begin{equation}\label{eq:catplus}
    \phi_{\pm}(z) := \frac{1}{2}(\ket{z}\pm\ket{-z})(\bra{z}\pm\bra{-z})\;.
\end{equation}
The temporarily ignored normalization factor is denoted as:
\begin{equation}
    N_{\pm}:=\Tr{\phi_\pm(z)} = 1+e^{-2|z|^2}\;.
\end{equation}
In other words, the normalized two-component cat state is written as $\psi = \phi_+(z)/N_+$. 
To obtain the action of the bosonic loss channel $\CNN_L[\gamma_L]$ (Eq.~\eqref{eq:NL}), we first find the action of $U_{\theta_L}$
\begin{equation}
    U_{\theta_L}\frac{\ket{z} + \ket{-z}}{\sqrt{2}}\otimes \ket{0}
    =
    \frac{1}{\sqrt{2}}\left(\ket{\sqrt{1-\gamma_L}z}\ket{\sqrt{\gamma_L}z} + \ket{-\sqrt{1-\gamma_L}z}\ket{-\sqrt{\gamma_L}z}\right)\;.
\end{equation}
Let
\begin{equation}
    y := \sqrt{1-\gamma_L}z\;,\qquad n_{\pm} := \frac{1}{2}(1\pm e^{-2\gamma_L|z|^2})\;,
\end{equation}
then it is straightforward to see that the noisy input state is a low-rank state:
\begin{equation}\label{eq:Nphi_phi_phi}
    \begin{aligned}
        \rho&:=\frac{1}{N_+}\CNN_L[\gamma_L](\phi_+(z))\\
        &= \frac{1}{2}
        \begin{bmatrix}
            \ket{y} & \ket{-y}
        \end{bmatrix}
        \begin{bmatrix}
            1 & \langle \sqrt{\gamma_L} z| -\sqrt{\gamma_L} z\rangle\\
            \langle -\sqrt{\gamma_L} z| \sqrt{\gamma_L} z\rangle & 1\\
        \end{bmatrix}
        \begin{bmatrix}
            \bra{\sqrt{y}} \\ \bra{-y}
        \end{bmatrix}\\
        &=\frac{1}{2}
        \begin{bmatrix}
            \ket{y} & \ket{-y}
        \end{bmatrix}
        \begin{bmatrix}
            1 & e^{-2\gamma_L|z|^2}\\
            e^{-2\gamma_L|z|^2} & 1\\
        \end{bmatrix}
        \begin{bmatrix}
            \bra{y} \\ \bra{-y}
        \end{bmatrix} \\
        &=n_+\phi_+(y) + n_-\phi_-(y)\;.
    \end{aligned}
\end{equation}
The non-zero eigenvalues of $\rho$ are the same as the following matrix:
\begin{equation}
\begin{aligned}
    T_1 &:= \frac{1}{2N_+(z)} \begin{bmatrix}
            1 & e^{-2\gamma_L|z|^2}\\
            e^{-2\gamma_L|z|^2} & 1\\
        \end{bmatrix}^{1/2}
        \begin{bmatrix}
            \langle y|y\rangle & \langle y|-y\rangle\\
            \langle -y|y\rangle & \langle -y|-y\rangle\\
        \end{bmatrix}
        \begin{bmatrix}
            1 & e^{-2\gamma_L|z|^2}\\
            e^{-2\gamma_L|z|^2} & 1\\
        \end{bmatrix}^{1/2} \\
        &= \frac{1}{2(1+e^{-2|z|^2})} \begin{bmatrix}
            1 & e^{-2\gamma_L|z|^2}\\
            e^{-2\gamma_L|z|^2} & 1\\
        \end{bmatrix}^{1/2}
        \begin{bmatrix}
            1 & e^{-2|y|^2}\\
            e^{-2|y|^2} & 1\\
        \end{bmatrix}
        \begin{bmatrix}
            1 & e^{-2\gamma_L|z|^2}\\
            e^{-2\gamma_L|z|^2} & 1\\
        \end{bmatrix}^{1/2}\;.
\end{aligned}
\end{equation}
The next step is finding the convolution of $\rho$. 
Utilizing $\phi_+(y)\boxplus\phi_-(y) = \phi_-(y)\boxplus\phi_+(y)$ (due to the symmetry in the phase space), the convolution of the noisy input can be represented as:
\begin{equation}\label{eq:NLphi_plus_boxplus}
    \begin{aligned}
        \rho\boxplus \rho &= \frac{1}{N_+^2}\CNN_L[\gamma_L](\phi_+(z))\boxplus \CNN_L[\gamma_L](\phi_+(z))\\
        &= \frac{1}{N_+^2}\left(n_+^2 \phi_+(y)\boxplus\phi_+(y) + 2n_+n_- \phi_+(y)\boxplus\phi_-(y) + n_-^2 \phi_-(y)\boxplus\phi_-(y)\right)\;.
    \end{aligned}  
\end{equation}
To proceed, we need to find the convolution between $\phi_\pm(y)$. 
After some derivation, we obtain:
\begin{equation}
\begin{aligned}
    \phi_\pm(y)\boxplus \phi_\pm(y) & = \Ptr{2}{U_{\pi/4}(\phi_\pm(y) \otimes \phi_\pm(y))U_{\pi/4}^\dag}\\
    &= \Ptr{2}{\frac{\ket{0} \otimes \left(\ket{\sqrt{2}y} + \ket{-\sqrt{2}y}\right) \pm \left(\ket{\sqrt{2}y} + \ket{-\sqrt{2}y}\right)\otimes \ket{0}}{2}\times \text{h.c.} } \\
    &=\frac{1}{4}\begin{bmatrix}
        \ket{0} & \ket{\sqrt{2}y}+ \ket{-\sqrt{2}y}
    \end{bmatrix}
    \begin{bmatrix}
        2(1+e^{-4|y|^2}) & \pm 2 e^{-|y|^2}\\
        \pm 2 e^{-|y|^2} & 1
    \end{bmatrix}
    \begin{bmatrix}
        \bra{0} \\ \bra{\sqrt{2}y} + \bra{-\sqrt{2}y}
    \end{bmatrix} \;,
\end{aligned}    
\end{equation}
\begin{equation}
    \begin{aligned}
    \phi_\pm(y)\boxplus \phi_\mp(y) & = \Ptr{2}{U_{\pi/4}(\phi_\pm(y) \otimes \phi_\mp(y))U_{\pi/4}^\dag}\\
    &= \Ptr{2}{\frac{\ket{0} \otimes \left(\ket{\sqrt{2}y} - \ket{-\sqrt{2}y}\right) \pm \left(\ket{\sqrt{2}y} - \ket{-\sqrt{2}y}\right)\otimes \ket{0}}{2}\times \text{h.c.} } \\
    &=\frac{1}{4}\begin{bmatrix}
        \ket{0} & \ket{\sqrt{2}y} - \ket{-\sqrt{2}y}
    \end{bmatrix}
    \begin{bmatrix}
        2(1-e^{-4|y|^2}) & 0\\
        0 & 1
    \end{bmatrix}
    \begin{bmatrix}
        \bra{0} \\ \bra{\sqrt{2}y} - \bra{-\sqrt{2}y}
    \end{bmatrix}\;,
\end{aligned}    
\end{equation}
and 
\begin{equation}
    \phi_\pm(y)\boxminus \phi_\pm(y) = \phi_\pm(y)\boxplus \phi_\pm(y)\;,\quad
    \phi_\pm(y)\boxminus \phi_\mp(y) = \phi_\pm(y)\boxplus \phi_\mp(y)\;. 
\end{equation}
Denote that 
\begin{align}
    \Upsilon &:= 
    \begin{bmatrix}
        \ket{0} & \ket{\sqrt{2}y} + \ket{-\sqrt{2}y} & \ket{\sqrt{2}y} - \ket{-\sqrt{2}y}
    \end{bmatrix} \;, \\
    A &:= \frac{1}{4}\begin{bmatrix}
        2(n_+^2 + n_-^2)(1+e^{-4|y|^2}) + 4n_+n_-(1-e^{-4|y|^2})&2(n_+^2 - n_-^2)e^{-|y|^2}&0\\
        2(n_+^2 - n_-^2)e^{-|y|^2}&n_+^2 + n_-^2&0\\
        0&0&2n_+n_-
    \end{bmatrix}\;,
\end{align}
we have:
\begin{equation}
    \rho\boxplus \rho = \frac{1}{N_+^{-2}}\CNN_L[\gamma_L](\phi_+(z))\boxplus \CNN_L[\gamma_L](\phi_+(z)) = \frac{1}{N_+^{2}} \Upsilon A \Upsilon^\dag\;.
\end{equation}
It follows that the spectrum of the normalized state $\rho'$ is given by the eigenvalues of the following:
\begin{equation}
    \begin{aligned}
        T_2&:= \frac{1}{N_+^2}A^{1/2}\Upsilon^\dag \Upsilon A^{1/2} \\
        &= 
        \frac{1}{(1+e^{-2|z|^2})^2}
        A^{1/2}
        \begin{bmatrix}
            1&2e^{-|y|^2}&0\\
            2e^{-|y|^2}&2(1+e^{-4|y|^2})&0\\
            0&0&2(1-e^{-4|y|^2})
        \end{bmatrix}
        A^{1/2}\;.
    \end{aligned}
\end{equation}
Hence, due to the symmetry of $\boxplus$ and $\boxminus$, the mutual information non-Gaussianity of $\rho = \CNN_L[\gamma_L](\psi)$ is given by:
\begin{equation}\label{eq:MING_lossy_cat}
    \begin{aligned}
        \MING(\rho) &= -2\Tr{(\rho\boxplus\rho)\log_2 (\rho\boxplus\rho)} - 2\Tr{\rho\log_2 \rho} \\
        &= -2\Tr{T_2\log_2 T_2} -2\Tr{T_1\log_2 T_1}\;.
    \end{aligned}
\end{equation}
The numerical result of Eq.~\eqref{eq:MING_lossy_cat} is presented in Fig.~\ref{fig:MING_curves}(c). 

In order to find $d_F(\rho)$, note that $d_F(\rho)$ can be expanded as
\begin{align}
    d_F(\rho) &= \|U_{\pi/4}(\rho\otimes \rho) U^\dag_{\pi/4} - (\rho\boxplus\rho)\otimes(\rho\boxminus\rho)\|_F \\
    &= \left\{Tr{\rho^2}^2 + \Tr{(\rho\boxplus\rho)^2}\Tr{(\rho\boxminus\rho)^2} - 2\Tr{U_{\pi/4}(\rho\otimes \rho) U^\dag_{\pi/4}(\rho\boxplus\rho)\otimes(\rho\boxminus\rho)}\right\}^{1/2} \\
    &= \left\{\Tr{\rho^2}^2 + \Tr{(\rho\boxplus\rho)^2}^2 - 2\Tr{U_{\pi/4}(\rho\otimes \rho) U^\dag_{\pi/4}(\rho\boxplus\rho)\otimes(\rho\boxplus\rho)}\right\}^{1/2} \\
    &= \left\{\Tr{T_1^2}^2 + \Tr{T_2^2}^2 - \frac{2}{N_+^4}\Tr{\left(\Upsilon \otimes \Upsilon\right)^\dag U_{\pi/4}(\rho\otimes \rho) U^\dag_{\pi/4}\left(\Upsilon \otimes \Upsilon\right)(A\otimes A)}\right\}^{1/2}\;.
\end{align}
In the above expression, $T_1$, $T_2$, $\Upsilon$ and $A$ have already been derived or defined. The remaining task is finding the cross term under the square root, we expand
\begin{equation}
    \begin{aligned}
        U_{\pi/4}(\rho\otimes \rho)U_{\pi/4}^\dag &= \frac{1}{N_+^2}\sum_{s_1,s_2\in\{+,-\}} n_{s_1}n_{s_2} U_{\pi/4}(\phi_{s_1}\otimes \phi_{s_2})U_{\pi/4}^\dag\\
        &= \frac{1}{N_+^2}\sum_{s_1,s_2\in\{+,-\}} n_{s_1}n_{s_2} \proj{\omega_{s_1,s_2}}\;,
    \end{aligned}
\end{equation}
where $\ket{\omega_{s_1,s_2}}:=\frac{1}{2}\left[\ket{0}\otimes\ket{\sqrt{2}y} + s_1s_2 \ket{0}\otimes\ket{-\sqrt{2}y} + s_1\ket{\sqrt{2}y}\otimes\ket{0} + s_2\ket{-\sqrt{2}y}\otimes\ket{0}\right]$.
For convenience, let's define the following scalar-valued vector:
\begin{equation}
    \nu_{s_1,s_2}:=(\Upsilon\otimes \Upsilon)^\dag\ket{\omega_{s_1,s_2}} = 
    \frac{1}{2}
    \begin{bmatrix}
        e^{-y^2}(1+s_1)(1+s_2) \\
        (1+e^{-4y^2})(1+s_1s_2) + 2e^{-2y^2}(s_1+s_2) \\
        (1-e^{-4y^2})(1-s_1s_2) \\
        2e^{-2y^2}(1+s_1s_2) + (1+e^{-4y^2})(s_1+s_2) \\
        2(e^{-y^2}+e^{-5y^2})(1+s_1)(1+s_2) \\
        2(e^{-y^2}-e^{-5y^2})(1-s_1s_2) \\
        (1-e^{-4y^2})(s_1 - s_2)\\
        2(e^{-y^2} - e^{-5y^2})(s_1 - s_2)\\
        0
    \end{bmatrix}\;,
\end{equation}
where $\pm\equiv \pm 1$.
So the cross term is written as:
\begin{equation}
    -2\Tr{U_{\pi/4}(\rho\otimes \rho) U^\dag_{\pi/4}(\rho\boxplus\rho)\otimes(\rho\boxplus\rho)} = -\frac{2}{N_+^6}\sum_{s_1,s_2\in\{+,-\}} n_{s_1}n_{s_2} \nu_{s_1,s_2}^\dag(A\otimes A)\nu_{s_1,s_2}\;,
\end{equation}
which yields
\begin{equation}\label{eq:d_F_loss_cs}
    d_F(\rho) = \left[\Tr{T_1^2}^2 + \Tr{T_2^2}^2 -\frac{2}{N_+^6}\sum_{s_1,s_2\in\{+,-\}} n_{s_1}n_{s_2} \nu_{s_1,s_2}^\dag(A\otimes A)\nu_{s_1,s_2}\right]^{1/2}\;.
\end{equation}
The values of Eq.~\eqref{eq:d_F_loss_cs} with respect to different $z$ and $\gamma_L$ are presented in Fig.~\ref{fig:curves}(c). 

Note that as $\gamma_L \to 1$, any input state converges to the vacuum state, which is Gaussian. In this limit, we have $d_F = 0$ and $\MING_\alpha = 0$. In contrast, when $|z| \to \infty$ and $\gamma_L = 0$, the state $\rho \boxplus \rho = \psi \boxplus \psi$ becomes a density matrix with only two nonzero eigenvalues, each equal to $\frac{1}{2}$. This yields $\MING_1(\rho) = 2(S(\rho \boxplus \rho) - S(\rho)) = 2$.
In this extreme case, we also find $\Tr{\rho \boxplus \rho} \to \frac{1}{2}$ and
$\Tr{U_{\pi/4} (\rho \otimes \rho) U_{\pi/4}^\dagger  (\rho \boxplus \rho) \otimes (\rho \boxminus \rho)} \to \frac{1}{4}$,
so the Frobenius measure approaches $d_F \to \sqrt{1 + \frac{1}{2^2} - 2 \times \frac{1}{4}} = \sqrt{\frac{3}{4}}$.

When $|z| \to \infty$ and $0 < \gamma_L < 1$, the state $\rho$ asymptotically has two equal nonzero eigenvalues, each equal to $\frac{1}{2}$, giving $\Tr{\rho^2} \to 2 \times \left(\frac{1}{2}\right)^2 = \frac{1}{2}$. Meanwhile, $\rho \boxplus \rho$ asymptotically has three nonzero eigenvalues: $\frac{1}{2}$, $\frac{1}{4}$, and $\frac{1}{4}$, leading to $\Tr{(\rho \boxplus \rho)^2} \to \left(\frac{1}{2}\right)^2 + 2 \times \left(\frac{1}{4}\right)^2 = \frac{3}{8}$.
Additionally, $\Tr{U_{\pi/4}(\rho \otimes \rho) U_{\pi/4}^\dagger  (\rho \boxplus \rho) \otimes (\rho \boxminus \rho)} \to \frac{1}{8}$.
As a result, the Frobenius measure approaches $d_F \to \sqrt{ \left(\frac{1}{2}\right)^2 + \left(\frac{3}{8}\right)^2 - 2 \times \frac{1}{8} } = \frac{3}{8}$. These asymptotic behaviors are illustrated in Fig.~\ref{fig:curves}(c) and Fig.~\ref{fig:MING_curves}(a).

\end{appendix}

\end{document}